\title{Uniform Proofs of Normalisation and\\
Approximation for Intersection Types}
\author{Kentaro Kikuchi
\institute{RIEC, Tohoku University\\
Katahira 2-1-1, Aoba-ku, Sendai 980-8577, Japan}
\email{kentaro@nue.riec.tohoku.ac.jp}
}
\begin{document}
\maketitle

\begin{abstract}
  We present intersection type systems in the style of sequent
  calculus, modifying the systems that Valentini introduced to prove
  normalisation properties without using the reducibility method.  Our
  systems are more natural than Valentini's ones and equivalent to the
  usual natural deduction style systems.  We prove the
  characterisation theorems of strong and weak normalisation through
  the proposed systems, and, moreover, the approximation theorem
  by means of direct inductive arguments.  This provides in a uniform
  way proofs of the normalisation and approximation theorems via type
  systems in sequent calculus style.
\end{abstract}

\section{Introduction}

A traditional way of proving strong normalisation for typed
$\lambda$-terms is the reducibility method \cite{Tait1967}, 
which uses set-theoretic comprehension.  Other methods without using
reducibility have also been studied in the literature (see,
e.g.\ Section 5 of \cite{vRSSX1999} for a review of those methods).
Some of them use an inductive characterisation of strongly normalising
$\lambda$-terms given by van Raamsdonk and Severi \cite{vRS1995}.  In
\cite{Valentini2001}, Valentini introduced, instead of using the
inductive characterisation, an intersection type system that is closed
under the rules of the original system, and proved strong
normalisation by a simple induction on the typing derivation.

In this paper we develop Valentini's approach further providing an
improvement on his system and its extensions with an axiom for the
type constant $\omega$.  These systems are in the style of sequent
calculus and equivalent to the original intersection type systems in
natural deduction style.  Using the new systems, we prove the
characterisation theorems of strong and weak normalisation, which are
well-known properties of intersection type systems
\cite{Pottinger1980,CDV1981}.

Another important point in our approach is that we design new systems
that derive the same sequents as the original natural deduction style
systems do, so that we can prove various other properties than
normalisation by simple inductions on the typing derivation
(cf.\ \cite{Kikuchi2009}).
In the present paper we illustrate that by showing
the approximation theorem for the type
system with $\omega$, which is usually proved using reducibility
predicates over a typing context and a type
(see, e.g.\ \cite{MR2003d:03016,Barendregt2013}).

The difference between the systems in \cite{Valentini2001} and ours is
the following.  First, some rules of the systems in
\cite{Valentini2001} have restrictions on types to be type variables. 
Also, the rule for abstraction takes a form that implies the
$\eta$-rule.  On the other hand, our systems do not have the
restrictions on types, and our rule for abstraction is the usual one.
In this natural setting, we show that our system is closed under the
rules of the original natural deduction style system.
This part of the proof of strong
normalisation is much shorter than that in \cite{Valentini2001}.
Secondly, the system characterising weakly normalising $\lambda$-terms
in \cite{Valentini2001} does not have the type constant $\omega$, and
is not related to the original natural deduction style system.  In
this paper, we introduce new systems with an axiom for the type
constant $\omega$, and prove weak normalisation of $\lambda$-terms
that are typable with $\omega$-free types in the original system.  The
closure under the rules of the original system is shown by almost the
same argument as that in the case of the system without $\omega$.



In \cite{Valentini2001}, only normalisation properties are discussed,
and other properties than normalisation are not proved using the
sequent calculus style systems.  Some other papers
\cite{vRS1995,Matthes2000,David2001,Abel2007} have studied strong
normalisation for terms typable with intersection types without using
reducibility.  Each of them uses an inductive characterisation of
strongly normalising terms, but any other properties than
normalisation have not been treated.  So the present paper seems to be
the first to apply a proof method for normalisation without
reducibility to other properties of intersection type systems.

There is also an attempt in \cite{vanbakel04strongly} to give uniform
proofs of the characterisation theorems of normalisation and the
approximation theorem.  The method is through strong normalisation for
reduction on typing derivations.  However, it uses reducibility
predicates to prove the strong normalisation, and the proof seems more
complicated than ours.

The organisation of the paper is as follows.  In Section
\ref{section:system} we introduce two kinds of intersection type
systems.  In Section \ref{section:sn} we prove the characterisation
theorem of strong normalisation through the new type system.  In
Section \ref{section:wn} we introduce type systems with $\omega$,
and prove the characterisation theorem of weak normalisation.  In
Section \ref{section:app} we prove
the approximation theorem using one of the new systems with $\omega$.




\section{Intersection type systems}\label{section:system}

In this section we introduce two intersection type systems: one is in
the ordinary natural deduction style and the other in sequent calculus
style.  They prove to be equivalent, and both characterise strongly
normalising $\lambda$-terms.  

First we introduce some basic notions on the $\lambda$-calculus
\cite{Barendregt1984}.  The set $\Lambda$ of $\lambda$-terms
is defined by the grammar: $M::=x\mid MM\mid\lam{x}{M}$ where $x$ ranges
over a denumerable set of variables.  We use letters $x,y,z,\dots$ for
variables and $M,N,P,\dots$ for $\l$-terms.  The notions of free and
bound variables are defined as usual.  The set of free variables occurring
in a $\l$-term $M$ is denoted by $\FV{M}$.  
We identify $\alpha$-convertible $\l$-terms, and use $\equiv$ to denote
syntactic equality modulo $\alpha$-conversion.
${\underline{\ }}\subst{\underline{\ }}{\underline{\ }}$ is used for
usual capture-free substitution.

The $\beta$-rule is stated as $(\lam{x}{M})N\red M\subst{x}{N}$, and
$\beta$-reduction is the contextual closure of the $\beta$-rule.  We
use $\bred$ for one-step reduction, and $\bredstar$ for its reflexive
transitive closure.
A $\l$-term $M$ is said to be \emph{strongly}
(\emph{weakly}) \emph{normalising} if all (some, respectively)
$\beta$-reduction sequences starting from $M$ terminate.  The set of
strongly (weakly) normalising $\l$-terms is denoted by $\SN{\beta}$
($\WN{\beta}$, respectively).

\begin{figure}[b]
\[\begin{array}{|c|}
\upline\\[-12pt]
\infer[(\mathsf{Ax})]{\GA,x:\fA\vd x:\fA}{}
\hspace{3ex}
\infer[(\to\mathsf{I})]{\GA\vd\lam{x}{M}:\fA\to\fB}{
 \GA,x:\fA\vd M:\fB}
\hspace{3ex}
\infer[(\cap\,\mathsf{I})]{\GA\vd M:\fA\cap\fB}{
 \GA\vd M:\fA & \GA\vd M:\fB}\\
\multicolumn{1}{|l|}{\hspace*{22ex}\text{where $x\notin\GA$}}
\\\\
\infer[(\to\mathsf{E})]{\GA\vd MN:\fB}{
 \GA\vd M:\fA\to\fB & \GA\vd N:\fA}
\hspace{3ex}
\infer[(\cap\,\mathsf{E})]{\GA\vd M:\fA}{
 \GA\vd M:\fA\cap\fB}
\hspace{3ex}
\infer[(\cap\,\mathsf{E})]{\GA\vd M:\fB}{
 \GA\vd M:\fA\cap\fB}\\[-12pt]
\downline
\end{array}
\]
\caption{Natural deduction style system $\LA_{\cap}$}\label{table:lambda-inters}
\end{figure}

The set of types is defined by the grammar:
$\fA::=\varphi\mid\fA\to\fA\mid\fA\cap\fA$ where $\varphi$ ranges over
a denumerable set of type variables.  We use letters
$\fA,\fB,\fC,\dots$ for arbitrary types.  The type assignment systems
$\LA_{\cap}$ and $\LA_{\cap}^{\mathit{s}}$ are defined by the rules in
Figures~\ref{table:lambda-inters} and \ref{table:lambda-inters-s},
respectively.  A \emph{typing context} is defined as a finite set of
pairs $\{x_1:\fA_1,\dots,x_n:\fA_n\}$ where the variables are pairwise
distinct in the system $\LA_{\cap}$ while they may be the same in the
system $\LA_{\cap}^{\mathit{s}}$.  A variable with different types is
intended to have the type of intersection of all of them.  The typing
context $\GA,x:\fA$ denotes the union $\GA\cup\{x:\fA\}$, and
$x\notin\GA$ means that $x$ does not appear in $\GA$, i.e., 
for no type $\fA$, $x:\fA\in\GA$.
Note that $x:\fA\in\GA$ is possible in the typing context $\GA,x:\fA$.
In particular, the premisses of the rule $(\mathsf{L}\to)$ may have
$x:\fA_1\to\fA_2$ in $\GA$.
In that case, $x:\fA_1\to\fA_2$ is introduced by the rule
$(\mathsf{L}\to)$ with implicit contraction.


\begin{figure}[t]
\[\begin{array}{|c|}
\upline\\[-12pt]
\infer[(\mathsf{Ax})]{\GA,x:\fA\vds x:\fA}{}
\qquad
\infer[(\mathsf{Beta})^{\mathit{s}}]{\GA\vds(\lam{x}{M})NN_1\dots N_n:\fA}{
 \GA\vds M\subst{x}{N}N_1\dots N_n:\fA & \GA\vds N:\fB}
\\\\
\infer[(\mathsf{L}\to)]{\GA,x:\fA_1\to\fA_2\vds xNN_1\dots N_n:\fB}{
 \GA\vds N:\fA_1 & \GA,y:\fA_2\vds yN_1\dots N_n:\fB}
\qquad
\infer[(\mathsf{R}\to)]{\GA\vds\lam{x}{M}:\fA\to\fB}{
 \GA,x:\fA\vds M:\fB}
\\
\multicolumn{1}{|l|}{\text{where $y\notin\FV{N_1}\cup\dots\cup\FV{N_n}$
 and $y\notin\GA$
\hspace{5ex}where $x\notin\GA$}}\\\\
\infer[(\mathsf{L}\,\cap)]{\GA,x:\fA_1\cap\fA_2\vds xN_1\dots N_n:\fB}{
 \GA,x:\fA_1,x:\fA_2\vds xN_1\dots N_n:\fB}
\qquad
\infer[(\mathsf{R}\,\cap)]{\GA\vds M:\fA\cap\fB}{
 \GA\vds M:\fA & \GA\vds M:\fB}\\[-12pt]
\downline
\end{array}
\]
\caption{Sequent calculus style system $\LA_{\cap}^{\mathit{s}}$}\label{table:lambda-inters-s}
\end{figure}

The system in \cite{Valentini2001} has the restriction in
$\LA_{\cap}^{\mathit{s}}$ that the type $\fA$ in the rules
$(\mathsf{Ax})$ and $(\mathsf{Beta})^{\mathit{s}}$ and the type $\fB$
in the rules $(\mathsf{L}\to)$ and $(\mathsf{L}\,\cap)$ must be type
variables.  Also, the rule $(\mathsf{R}\to)$ takes the following form:
$$
\infer{\GA\vds M:\fA\to\fB}{
 \GA,x:\fA\vds Mx:\fB}
$$
where $x\notin\GA$ and $x\notin\FV{M}$, so that the system includes the
$\eta$-rule and is not equivalent to the system $\LA_{\cap}$.  (For
example, $\vds\lam{x}{x}:(\fA\to\fB)\to((\fC\cap\fA)\to\fB)$ is
derivable in the system of \cite{Valentini2001}, but
$\vd\lam{x}{x}:(\fA\to\fB)\to((\fC\cap\fA)\to\fB)$ is not derivable in
$\LA_{\cap}$.)

\begin{example}
 Self-application can now be typed naturally in $\LA_{\cap}^{\mathit{s}}$, as
 follows (cf.\ \cite[pp.~478--479]{Valentini2001}).
 $$
 \infer[(\mathsf{R}\to)]{\vds\lam{x}{xx}:(\fA\cap(\fA\to\fB))\to\fB}{
  \infer[(\mathsf{L}\,\cap)]{x:\fA\cap(\fA\to\fB)\vds xx:\fB}{
   \infer[(\mathsf{L}\to)]{x:\fA,x:\fA\to\fB\vds xx:\fB}{
    x:\fA\vds x:\fA & x:\fA,y:\fB\vds y:\fB}}}
 $$
\end{example}

The $(\mathsf{Beta})^{\mathit{s}}$-free part of
the system $\LA_{\cap}^{\mathit{s}}$ types
exactly the terms in $\beta$-normal form,
and any $\beta$-redex in a typed term must be constructed
through the rule $(\mathsf{Beta})^{\mathit{s}}$.
So it is immediately seen that
the terms that are not head-normalising
(\eg $(\lam{x}{xx})(\lam{x}{xx})$)
can not be typed in the system $\LA_{\cap}^{\mathit{s}}$.

\begin{proposition}\label{proposition:GA_cap}
 $\GA,x:\fA_1\cap\fA_2\vds M:\fB$ if and only if
 $\GA,x:\fA_1,x:\fA_2\vds M:\fB$.
\end{proposition}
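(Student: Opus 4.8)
The plan is to prove both implications by induction on the typing derivation in $\LA_{\cap}^{\mathit{s}}$, treating the ``if and only if'' as a single statement so that each direction may invoke the other on smaller derivations. The organising observation I would use first is that in each of $(\mathsf{Ax})$, $(\mathsf{L}\to)$ and $(\mathsf{L}\,\cap)$ the principal variable of the context is the head variable of the subject term; consequently the displayed occurrence of $x$ (carrying $\fA_1\cap\fA_2$ on one side, and $\fA_1$ together with $\fA_2$ on the other) can be the principal formula of the last inference only when $M$ is of the form $xN_1\dots N_n$. In every other case the last rule leaves the displayed occurrence of $x$ in the passive part of the context, so one applies the induction hypothesis to the premise(s) and reapplies the same rule; the only side conditions to check are those of $(\mathsf{R}\to)$ and $(\mathsf{L}\to)$, and these are immediate, since a variable $y$ fails to occur in $\GA,x:\fA_1\cap\fA_2$ exactly when it fails to occur in $\GA,x:\fA_1,x:\fA_2$.

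For the left-to-right direction it then remains to treat $M\equiv xN_1\dots N_n$ with the displayed $x:\fA_1\cap\fA_2$ principal. This excludes $(\mathsf{L}\to)$, since $\fA_1\cap\fA_2$ is not an arrow type, and excludes $(\mathsf{R}\to)$ and $(\mathsf{Beta})^{\mathit{s}}$, since $xN_1\dots N_n$ is neither an abstraction nor a head redex. If the last rule is $(\mathsf{L}\,\cap)$ on $x:\fA_1\cap\fA_2$, its premise already is $\GA,x:\fA_1,x:\fA_2\vds xN_1\dots N_n:\fB$. If it is $(\mathsf{Ax})$, then $n=0$ and $\fB=\fA_1\cap\fA_2$, and I would rebuild the derivation as two instances of $(\mathsf{Ax})$ deriving $\GA,x:\fA_1,x:\fA_2\vds x:\fA_1$ and $\GA,x:\fA_1,x:\fA_2\vds x:\fA_2$, followed by $(\mathsf{R}\,\cap)$.

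For the right-to-left direction, when $M\equiv xN_1\dots N_n$ no induction is needed at all: the hypothesis gives $\GA,x:\fA_1,x:\fA_2\vds xN_1\dots N_n:\fB$, and one application of $(\mathsf{L}\,\cap)$ produces $\GA,x:\fA_1\cap\fA_2\vds xN_1\dots N_n:\fB$. This is the step I expect to be the main point to get right: one might worry that $x:\fA_1$ or $x:\fA_2$ is consumed as a principal formula somewhere deep inside the derivation when $\fA_i$ is itself an arrow or an intersection, which would seem to require a weakening lemma in order to re-merge the two hypotheses; the head-variable observation is precisely what makes that worry vacuous, and as a result the proof needs neither a weakening nor a substitution lemma. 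The remaining effort is routine context bookkeeping: typing contexts of $\LA_{\cap}^{\mathit{s}}$ are sets in which $x$ may already appear with further types, but since $\GA,x:\fA_1,x:\fA_2$ and $\GA,x:\fA_1\cap\fA_2$ differ only in the displayed pair, any other occurrence of $x$ stays inside $\GA$ and is carried along unchanged.
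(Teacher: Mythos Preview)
Your proposal is correct and takes the same approach as the paper, which records only ``By induction on the derivations.'' You have simply spelled out the case analysis that the paper leaves implicit; in particular your head-variable observation and the direct use of $(\mathsf{L}\,\cap)$ in the right-to-left direction are exactly what make the induction go through without auxiliary lemmas.
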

\begin{proof}
 By induction on the derivations.  \qed
\end{proof}

Henceforth we write $\GA_{\cap}$ for the typing context in which each
variable has the type of intersection of all the types that the
variable has in $\GA$.


\section{Characterisation of strongly normalising \texorpdfstring{$\bm{\lambda}$}{lambda}-terms}
\label{section:sn}

If one tries to prove strong normalisation for terms typed in the
system $\LA_{\cap}$ directly by induction on derivations, a difficulty
arises in the case of the rule $(\to\mathsf{E})$.  One way of
overcoming this difficulty is to use reducibility predicates
\cite{Tait1967}.  Here we use the sequent calculus style system
$\LA_{\cap}^{\mathit{s}}$ instead.  For the system
$\LA_{\cap}^{\mathit{s}}$, we can prove strong normalisation for typed
terms directly by induction on derivations.

\begin{theorem}\label{theorem:sn-s}
 If $\GA\vds M:\fA$ then $M\in\SN{\beta}$.
\end{theorem}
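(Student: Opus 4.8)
The plan is to prove the statement by induction on the derivation of $\GA\vds M:\fA$ in $\LA_{\cap}^{\mathit{s}}$. The cases of $(\mathsf{Ax})$, $(\mathsf{R}\to)$, $(\mathsf{R}\,\cap)$ and $(\mathsf{L}\,\cap)$ are essentially immediate: in the axiom case $M\equiv x$ is a variable, hence in $\SN{\beta}$; for $(\mathsf{R}\to)$ the term $\lam{x}{M}$ is strongly normalising iff $M$ is, which is the induction hypothesis; for $(\mathsf{R}\,\cap)$ one induction hypothesis already gives $M\in\SN{\beta}$; and for $(\mathsf{L}\,\cap)$ the term is unchanged and Proposition~\ref{proposition:GA_cap} lets us pass between the two forms of context, so the induction hypothesis applies directly. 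The two cases that carry the real content are $(\mathsf{L}\to)$ and $(\mathsf{Beta})^{\mathit{s}}$.

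For $(\mathsf{L}\to)$ the conclusion is $xNN_1\dots N_n$ and the induction hypotheses give $N\in\SN{\beta}$ and $yN_1\dots N_n\in\SN{\beta}$; from the latter each $N_i\in\SN{\beta}$, since every reduct of $N_i$ yields a reduct of $yN_1\dots N_n$. A term of the form $xNN_1\dots N_n$ with all of $N,N_1,\dots,N_n$ strongly normalising is itself strongly normalising, because any infinite reduction must eventually take place inside one of the arguments — the head variable $x$ blocks the creation of any redex at the root — so an infinite reduction of $xNN_1\dots N_n$ would induce an infinite reduction of one argument. (This is the standard fact that $\SN{\beta}$ is closed under forming $x\vec{P}$ from strongly normalising $\vec{P}$.) Thus the conclusion is in $\SN{\beta}$.

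The case $(\mathsf{Beta})^{\mathit{s}}$ is the crux, and I expect it to be the main obstacle. Here $M' \equiv (\lam{x}{M})NN_1\dots N_n$, and the induction hypotheses give $M\subst{x}{N}N_1\dots N_n\in\SN{\beta}$ and $N\in\SN{\beta}$; from the first hypothesis we again extract $M\subst{x}{N}\in\SN{\beta}$ and each $N_i\in\SN{\beta}$, and then $M\in\SN{\beta}$ as well (a reduction in $M$ lifts to one in $M\subst{x}{N}$). The goal is that $(\lam{x}{M})NN_1\dots N_n\in\SN{\beta}$. The essential combinatorial lemma I will need is a strengthened closure property: if $M\subst{x}{N}N_1\dots N_n\in\SN{\beta}$ and $N\in\SN{\beta}$, then $(\lam{x}{M})NN_1\dots N_n\in\SN{\beta}$ — i.e. $\SN{\beta}$ is closed under "anti-reduction" of the head $\beta$-redex in this applicative form. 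I would prove this lemma separately by a well-founded induction, say on the triple consisting of the reduction length of $M\subst{x}{N}N_1\dots N_n$, the reduction length of $N$, and the size of $M$: any one-step reduct of $(\lam{x}{M})NN_1\dots N_n$ is either the contractum $M\subst{x}{N}N_1\dots N_n$ itself (already in $\SN{\beta}$), or a reduction inside $M$, inside $N$, or inside some $N_i$, and in each of those cases the measure strictly decreases while the induced reduct is again of the required applicative form with the two hypotheses preserved (using that a step in $M$ or in an $N_i$ yields a corresponding step in $M\subst{x}{N}N_1\dots N_n$, and being careful when $x$ occurs several times or not at all in $M$). This lemma is exactly where the "closure under the rules of the original system" philosophy of the paper pays off, and it is the only place where a genuinely non-trivial argument about $\beta$-reduction is required; everything else is bookkeeping on the derivation.
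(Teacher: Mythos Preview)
Your proposal is correct and follows essentially the same approach as the paper: induction on the derivation, with $(\mathsf{Beta})^{\mathit{s}}$ identified as the only substantive case, reduced to the closure property ``$M\subst{x}{N}N_1\dots N_n\in\SN{\beta}$ and $N\in\SN{\beta}$ imply $(\lam{x}{M})NN_1\dots N_n\in\SN{\beta}$''. The only stylistic difference is that the paper proves this closure by a direct contradiction argument (any infinite reduction must eventually fire the head redex, yielding an infinite reduction from $M\subst{x}{N}N_1\dots N_n$), rather than your well-founded induction on a measure; also, the appeal to Proposition~\ref{proposition:GA_cap} in the $(\mathsf{L}\,\cap)$ case is unnecessary, since the induction hypothesis already applies to the premiss as a subderivation.
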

\begin{proof}
 By induction on the derivation of $\GA\vds M:\fB$ in $\LA_{\cap}^{\mathit{s}}$.
 The only problematic case is where the last rule applied is
 $(\mathsf{Beta})^{\mathit{s}}$.  In that case, by the induction
 hypothesis, we have $M\subst{x}{N}N_1\dots N_n\in\SN{\beta}$ and
 $N\in\SN{\beta}$.  From the former we have
 $M,N_1,\dots,N_n\in\SN{\beta}$.  Then any infinite reduction
 sequence starting from $(\lam{x}{M})NN_1\dots N_n$ must have the form
 $$
 \begin{array}{lll}
  (\lam{x}{M})NN_1\dots N_n
   & \bredstar & (\lam{x}{M'})N'N_1'\dots N_n' \\[0.5ex]
   & \bred & M'\subst{x}{N'}N_1'\dots N_n' \\[0.5ex]
   & \bred & \;\dots
 \end{array}
 $$
 where $M\bredstar M'$, $N\bredstar N'$ and $N_i\bredstar
 N_i'$ for $i\in\{1,\dots,n\}$.  But then there is an infinite
 reduction sequence
 $$
 \begin{array}{lll}
  M\subst{x}{N}N_1\dots N_n & \bredstar &
   M'\subst{x}{N'}N_1'\dots N_n' \\[0.5ex]
  & \bred & \;\dots
 \end{array}
 $$
 contradicting the hypothesis.  Hence $(\lam{x}{M})NN_1\dots
 N_n\in\SN{\beta}$.  \qed
\end{proof}

To complete a proof of strong normalisation for terms typed in the system
$\LA_{\cap}$, what remains to be shown is that if $M$ is
typable in $\LA_{\cap}$ then it is typable in $\LA_{\cap}^{\mathit{s}}$.
This is proved using several lemmas below.
First we 
show that $\LA_{\cap}^{\mathit{s}}$ is closed under
the weakening rule.

\begin{lemma}\label{lemma:weakening-s}
 If $\GA\vds M:\fB$ then $\GA,x:\fA\vds M:\fB$. 
\end{lemma}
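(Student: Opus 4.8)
The plan is to prove Lemma~\ref{lemma:weakening-s} by induction on the derivation of $\GA\vds M:\fB$ in $\LA_{\cap}^{\mathit{s}}$, strengthening the statement slightly so that the induction goes through smoothly: I will show that for every typing context $\Delta$ with $\GA\subseteq\Delta$ (equivalently, for $\Delta=\GA,x:\fA$ with $x$ fresh or already present with some type), if $\GA\vds M:\fB$ then $\Delta\vds M:\fB$. The reason a bare weakening by a single pair $x:\fA$ is awkward to push through directly is that the left rules $(\mathsf{L}\to)$, $(\mathsf{L}\,\cap)$, and the rule $(\mathsf{Beta})^{\mathit{s}}$ pass to their premisses a context that has been modified (a fresh variable $y$ added in $(\mathsf{L}\to)$, a variable duplicated in $(\mathsf{L}\,\cap)$); carrying the more liberal "$\GA\subseteq\Delta$" formulation makes the induction hypothesis applicable to those altered premiss-contexts without fuss.

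First I would dispatch the axiom case: if the derivation is $\GA',x':\fB\vds x':\fB$ by $(\mathsf{Ax})$, then since $\GA\subseteq\Delta$ we still have $x':\fB\in\Delta$, so $(\mathsf{Ax})$ applies with the larger context. Next, for each of the right rules $(\mathsf{R}\to)$ and $(\mathsf{R}\,\cap)$ and for $(\mathsf{Beta})^{\mathit{s}}$, I would simply invoke the induction hypothesis on the premiss(es) — taking care that in $(\mathsf{R}\to)$ the bound variable $x$ of the abstraction, which is required to satisfy $x\notin\GA$, can be chosen (by $\alpha$-conversion) to also satisfy $x\notin\Delta$, and similarly that in $(\mathsf{Beta})^{\mathit{s}}$ no side condition is affected — and then reapply the same rule. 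The left rules $(\mathsf{L}\to)$ and $(\mathsf{L}\,\cap)$ are handled the same way: here the principal variable $x$ of the rule lies in $\GA$ hence in $\Delta$, so the conclusion still has the required shape $\Delta',x:\ldots$; the eigenvariable $y$ in $(\mathsf{L}\to)$ can be renamed to avoid $\Delta$, and then the induction hypotheses on the two premisses (with the enlarged contexts $\Delta\vds N:\fA_1$ and $\Delta,y:\fA_2\vds yN_1\dots N_n:\fB$) let us reapply $(\mathsf{L}\to)$.

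Finally, I would specialise the strengthened statement to $\Delta:=\GA\cup\{x:\fA\}=\GA,x:\fA$, which is a legitimate context of $\LA_{\cap}^{\mathit{s}}$ (recall that in $\LA_{\cap}^{\mathit{s}}$ a variable is allowed to occur with several types), obtaining exactly $\GA,x:\fA\vds M:\fB$. I expect no genuine obstacle here; the only point demanding a little care — and the one I would state explicitly — is the renaming of bound/eigenvariables so that the freshness side conditions of $(\mathsf{R}\to)$ and $(\mathsf{L}\to)$ are preserved when the context grows, which is routine under the convention that $\alpha$-convertible terms are identified.
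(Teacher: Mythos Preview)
Your proposal is correct and follows the same approach as the paper, which simply says ``By induction on the derivation of $\GA\vds M:\fB$.'' The strengthening to an arbitrary supercontext $\Delta\supseteq\GA$ is harmless but not actually needed: since in $\LA_{\cap}^{\mathit{s}}$ contexts are sets that may repeat variables, adding the single pair $x:\fA$ to each premiss context and reapplying the rule already suffices (with the routine renaming of the eigenvariable in $(\mathsf{L}\to)$ and the bound variable in $(\mathsf{R}\to)$ that you note).
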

\begin{proof}
 By induction on the derivation of $\GA\vds M:\fB$.  \qed
\end{proof}

The next two lemmas are the essential difference from the proof of
\cite{Valentini2001}.  These are used in the proof of
Lemma~\ref{lemma:subst-s} below.  
The simply typed counterpart of 
Lemma~\ref{lemma:app-var} is found in the second
proof of strong normalisation for the simply typed $\lambda$-calculus
in \cite{JoachimskiMatthes2003}.

\begin{lemma}\label{lemma:app-var}
 If $\GA\vds M:\fA\to\fB$ and $x\notin\GA$ then $\GA,x:\fA\vds Mx:\fB$.
\end{lemma}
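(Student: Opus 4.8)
The plan is to argue by induction on the derivation of $\GA\vds M:\fA\to\fB$ in $\LA_{\cap}^{\mathit{s}}$, splitting on the last rule applied. The rule $(\mathsf{R}\,\cap)$ cannot be the last one, since its conclusion always carries an intersection type while $\fA\to\fB$ is an arrow type; so five cases remain, of which four are handled uniformly.

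In the cases $(\mathsf{Ax})$, $(\mathsf{L}\to)$, $(\mathsf{L}\,\cap)$ and $(\mathsf{Beta})^{\mathit{s}}$ the head of $M$ is unchanged and $Mx$ is just $M$ with one extra argument appended, so the idea is to re-apply the very same rule with the argument list extended by $x$. If the last rule is $(\mathsf{Ax})$, then $M$ is a variable $z$ with $z:\fA\to\fB\in\GA$, and $(\mathsf{L}\to)$ with empty tail and a fresh eigenvariable $y$, applied to the axioms $\GA,x:\fA\vds x:\fA$ and $\GA,x:\fA,y:\fB\vds y:\fB$, yields $\GA,x:\fA\vds zx:\fB$ (using $x\notin\GA$ for the first axiom). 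If the last rule is $(\mathsf{L}\to)$, $(\mathsf{L}\,\cap)$ or $(\mathsf{Beta})^{\mathit{s}}$, then $M$ is $zN_1\cdots N_n$, possibly with a leading redex; the induction hypothesis applies to the premise whose subject is that same applicative term (headed by the active variable or its renamed copy, respectively by the contractum of the head redex) — its context still avoids $x$ — and produces the corresponding judgement with $x$ appended to the subject. Re-applying the rule with the argument list now ending in $x$ gives the conclusion. The side premises (the argument $N$ of $(\mathsf{L}\to)$ and of $(\mathsf{Beta})^{\mathit{s}}$) are supplied by Lemma~\ref{lemma:weakening-s}, and all freshness conditions are met by choosing the eigenvariable $y$ distinct from $x$ and re-using it in $(\mathsf{L}\to)$.

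The remaining case, $(\mathsf{R}\to)$, is the one I expect to be the main obstacle. Here $M\equiv\lam{z}{M'}$ with premise $\GA,z:\fA\vds M':\fB$, and $Mx\equiv(\lam{z}{M'})x$ is itself a $\beta$-redex. I would reconstruct its derivation by $(\mathsf{Beta})^{\mathit{s}}$ with empty tail: the side premise $\GA,x:\fA\vds x:\fA$ is an axiom (using $x\notin\GA$), and the main premise $\GA,x:\fA\vds M'\subst{z}{x}:\fB$ is obtained from the rule's premise because replacing $z$ by the fresh $x$ is merely a renaming of a free variable — one first notes $x\notin\FV{M'}$ (a consequence of $x\notin\GA$, since the free variables of a typed term are among those of its context, a routine auxiliary induction), and then invokes preservation of typing under renaming of free variables, which is immediate by relabelling the derivation. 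This renaming step, rather than any of the ``append-an-argument'' cases, is the only point at which one must look beneath the surface of the derivation, and it is the part whose side conditions I would state with the most care.
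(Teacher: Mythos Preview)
Your proof is correct and follows the same approach as the paper: induction on the derivation of $\GA\vds M:\fA\to\fB$, with the $(\mathsf{R}\to)$ case handled by renaming the bound variable to $x$ and then applying $(\mathsf{Beta})^{\mathit{s}}$ with an axiom as the side premise. The paper only spells out the $(\mathsf{Ax})$, $(\mathsf{Beta})^{\mathit{s}}$ and $(\mathsf{R}\to)$ cases, while you also describe $(\mathsf{L}\to)$ and $(\mathsf{L}\,\cap)$ and make the renaming step in $(\mathsf{R}\to)$ more explicit, but the argument is essentially identical.
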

\begin{proof}
 By induction on the derivation of $\GA\vds M:\fA\to\fB$.
 Here we show a few cases.
 \begin{itemize}
  \renewcommand{\labelitemi}{$\bullet$}
  \item \raisebox{-1.5ex}[2ex][3ex]{$
	\infer[(\mathsf{Ax})]{\GA,y:\fA\to\fB\vds y:\fA\to\fB}{}
	$} \\
	In this case we take two axioms $\GA,x:\fA\vds x:\fA$ and
	$\GA,x:\fA,z:\fB\vds z:\fB$, and obtain
	$\GA,x:\fA,y:\fA\to\fB\vds yx:\fB$ by an instance of the
	$(\mathsf{L}\to)$ rule.
  \item \raisebox{-1.5ex}[4ex][3ex]{$
	\infer[(\mathsf{Beta})^{\mathit{s}}]{\GA\vds(\lam{y}{M})NN_1\dots N_n:
	\fA\to\fB}{
	 \GA\vds M\subst{y}{N}N_1\dots N_n:\fA\to\fB &
	 \GA\vds N:\fC}
	$} \\
	By the induction hypothesis, we have $\GA,x:\fA\vds
	M\subst{y}{N}N_1\dots N_nx:\fB$, and by
	Lemma~\ref{lemma:weakening-s}, we have $\GA,x:\fA\vds N:\fC$.
	From these, we obtain
	$\GA,x:\fA\vds(\lam{y}{M})NN_1\dots N_nx:\fB$ by an
	instance of the $(\mathsf{Beta})^{\mathit{s}}$ rule.
  \item \raisebox{-1.5ex}[4ex][3ex]{$
	\infer[(\mathsf{R}\to)]{\GA\vds\lam{y}{M}:\fA\to\fB}{
	 \GA,y:\fA\vds M:\fB}
	$} \\
	where $y\notin\GA$.  From $\GA,y:\fA\vds M:\fB$, we have
        $\GA,x:\fA\vds M\subst{y}{x}:\fB$.  From this and the axiom
        $\GA,x:\fA\vds x:\fA$, we obtain
        $\GA,x:\fA\vds(\lam{y}{M})x:\fB$ by an instance of the
        $(\mathsf{Beta})^{\mathit{s}}$ rule.  \qed
 \end{itemize}
\end{proof}

\begin{lemma}\label{lemma:inters-inv}
 If $\GA\vds M:\fA\cap\fB$ then $\GA\vds M:\fA$ and $\GA\vds M:\fB$.
\end{lemma}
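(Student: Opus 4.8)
The plan is to argue by induction on the derivation of $\GA\vds M:\fA\cap\fB$ in $\LA_{\cap}^{\mathit{s}}$, establishing the two conjuncts $\GA\vds M:\fA$ and $\GA\vds M:\fB$ simultaneously. The case $(\mathsf{R}\,\cap)$ is immediate, since its two premises are exactly $\GA\vds M:\fA$ and $\GA\vds M:\fB$. The case $(\mathsf{R}\to)$ cannot occur, because its conclusion type is an arrow, not an intersection. For the base case $(\mathsf{Ax})$ we have $M\equiv x$ with $x:\fA\cap\fB\in\GA$; to obtain $\GA\vds x:\fA$ I would start from the axiom $\GA,x:\fA,x:\fB\vds x:\fA$ (legitimate because a typing context of $\LA_{\cap}^{\mathit{s}}$ is a finite set of pairs in which a variable may be assigned several types, so adding $x:\fA$ and $x:\fB$ is harmless) and apply $(\mathsf{L}\,\cap)$ once with no arguments, collapsing $x:\fA,x:\fB$ to $x:\fA\cap\fB$; since $x:\fA\cap\fB\in\GA$, the resulting conclusion is $\GA\vds x:\fA$. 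The argument for $\fB$ is symmetric.

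For the remaining rules I would push the inversion through the premise that carries the conclusion type and re-apply the same rule with the other premises untouched. For $(\mathsf{Beta})^{\mathit{s}}$, with $M\equiv(\lam{y}{M'})NN_1\dots N_n$, the first premise is $\GA\vds M'\subst{y}{N}N_1\dots N_n:\fA\cap\fB$, so the induction hypothesis gives $\GA\vds M'\subst{y}{N}N_1\dots N_n:\fA$ and $\GA\vds M'\subst{y}{N}N_1\dots N_n:\fB$, and re-applying $(\mathsf{Beta})^{\mathit{s}}$ with the unchanged side premise $\GA\vds N:\fC$ yields the two desired judgments. For $(\mathsf{L}\to)$ and $(\mathsf{L}\,\cap)$ the type $\fA\cap\fB$ occurs in the rightmost premise; I apply the induction hypothesis there and re-apply the same left rule with the remaining premise(s) left as they are.

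I do not expect a genuine obstacle: once the induction hypothesis is available, each case is a single-step manipulation. The one point requiring a little care is the axiom case, where one must be precise about the shape of contexts in $\LA_{\cap}^{\mathit{s}}$ so that the step ``axiom followed by $(\mathsf{L}\,\cap)$'' is correctly justified; Proposition~\ref{proposition:GA_cap} is the conceptual counterpart of this observation. Together with Lemma~\ref{lemma:weakening-s} and Lemma~\ref{lemma:app-var}, this lemma will then feed the substitution lemma used to transfer typability from $\LA_{\cap}$ to $\LA_{\cap}^{\mathit{s}}$.
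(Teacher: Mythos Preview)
Your proposal is correct and follows exactly the approach the paper indicates: a straightforward induction on the derivation of $\GA\vds M:\fA\cap\fB$, which the paper states in a single line without spelling out the cases. Your case analysis is accurate, including the handling of $(\mathsf{Ax})$ via $(\mathsf{L}\,\cap)$ with $n=0$ and the idempotence of the set-based context.
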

\begin{proof}
 By induction on the derivation of $\GA\vds M:\fA\cap\fB$.  \qed
\end{proof}

Now we are in a position to prove the following important lemma.

\begin{lemma}\label{lemma:subst-s}
 $\LA_{\cap}^{\mathit{s}}$ is closed under substitution, i.e.,
 if $\GA,x:\fA_1,\dots,x:\fA_m\vds P:\fB$ where $x\notin\GA$, $m\ge 0$
 and $\fA_i\neq\fA_j$ for $i\neq j$, and, for any $i\in\{1,\dots,m\}$,
 $\GA\vds N:\fA_i$, then $\GA\vds P\subst{x}{N}:\fB$.
\end{lemma}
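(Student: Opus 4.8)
The plan is to induct on the derivation of $\GA,x:\fA_1,\dots,x:\fA_m\vds P:\fB$ in $\LA_{\cap}^{\mathit{s}}$. For each rule, I substitute $N$ for $x$ throughout, appeal to the induction hypothesis on the premisses, and reassemble the derivation with the same rule — the subtlety is that the variable $x$ may be the one being eliminated on the left of the current rule, so a case split on whether the principal variable of the rule is $x$ is needed. Before starting, note that by repeated use of $(\mathsf{R}\cap)$ and the hypotheses $\GA\vds N:\fA_i$, I may freely assume $\GA\vds N:\fA_1\cap\dots\cap\fA_m$ (when $m\ge 1$), which is convenient when $x$ occurs as a head variable.

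The easy cases are $(\mathsf{R}\to)$, $(\mathsf{R}\cap)$, $(\mathsf{L}\cap)$ (when the principal variable is not $x$, or when it is, absorbing the splitting of $x$'s types into the list $\fA_1,\dots,\fA_m$ via Proposition~\ref{proposition:GA_cap}), $(\mathsf{Beta})^{\mathit{s}}$, and $(\mathsf{Ax})$ — in the last, if $P\equiv x$ and $\fB=\fA_i$ for some $i$, the conclusion $\GA\vds N:\fA_i$ is a hypothesis; if $P$ is another variable, substitution does nothing. For $(\mathsf{Beta})^{\mathit{s}}$ one uses that substitution commutes with the $\beta$-step, i.e.\ $(M\subst{x}{N})\subst{y}{N'} \equiv \dots$ after renaming $y$ away from $N$, together with the induction hypothesis on both premisses (and Lemma~\ref{lemma:weakening-s} if $m=0$ so that the right premiss still mentions the removed $x$'s — actually only to keep contexts aligned). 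The genuinely delicate case is $(\mathsf{L}\to)$ when its principal variable \emph{is} $x$: the sequent has shape $\GA,x:\fA_1,\dots,x:\fA_m\vds xQ_1\dots Q_k:\fB$ derived from $\GA'\vds Q_1:\fD_1$ and $\GA',y:\fD_2\vds yQ_2\dots Q_k:\fB$ with $\fA_j = \fD_1\to\fD_2$ for the $j$ that was contracted. After substitution the term becomes $N\,Q_1'\dots Q_k'$ where $Q_i' \equiv Q_i\subst{x}{N}$, and I must build $\GA\vds N Q_1'\dots Q_k':\fB$ knowing only $\GA\vds N:\fA_1\cap\dots\cap\fA_m$ (hence $\GA\vds N:\fD_1\to\fD_2$ by Lemma~\ref{lemma:inters-inv}) and, by induction, $\GA\vds Q_1':\fD_1$ and $\GA,y:\fD_2\vds yQ_2'\dots Q_k':\fB$.

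The main obstacle is therefore exactly this step: $N$ need not be a variable, so I cannot simply apply $(\mathsf{L}\to)$. The remedy is a separate \emph{head-application lemma} — if $\GA\vds N:\fD_1\to\fD_2$, $\GA\vds Q:\fD_1$, and $\GA,y:\fD_2\vds yR_1\dots R_k:\fB$ with $y$ fresh, then $\GA\vds NQR_1\dots R_k:\fB$ — proved by a side induction on the derivation of $\GA\vds N:\fD_1\to\fD_2$, mirroring the structure of Lemma~\ref{lemma:app-var} (whose $(\mathsf{Ax})$, $(\mathsf{Beta})^{\mathit{s}}$, $(\mathsf{L}\to)$, $(\mathsf{L}\cap)$, $(\mathsf{R}\cap)$ cases go through by feeding the tail $R_1\dots R_k$ through, and using Lemma~\ref{lemma:inters-inv} at $(\mathsf{R}\cap)$); the $(\mathsf{R}\to)$ case produces a $(\mathsf{Beta})^{\mathit{s}}$ redex just as there. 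Granting that lemma, the $(\mathsf{L}\to)$-on-$x$ case closes immediately, and all remaining cases are the routine reassembly described above, completing the induction.
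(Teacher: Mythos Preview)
Your overall case analysis is sound and you have correctly identified the crux: the $(\mathsf{L}\to)$ rule with principal variable $x$. However, the ``head-application lemma'' you propose cannot be proved by the side induction you describe. In the $(\mathsf{R}\to)$ case of that induction you have $N\equiv\lambda v.M$ with premiss $\GA,v:\fD_1\vds M:\fD_2$, and to build $\GA\vds(\lambda v.M)QR_1\dots R_k:\fB$ via $(\mathsf{Beta})^{\mathit{s}}$ you need $\GA\vds M\subst{v}{Q}R_1\dots R_k:\fB$. Getting this requires first $\GA\vds M\subst{v}{Q}:\fD_2$ (substitute $Q$ for $v$) and then substituting $M\subst{v}{Q}$ for $y$ in $\GA,y:\fD_2\vds yR_1\dots R_k:\fB$ --- both are instances of the very substitution lemma you are proving. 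The analogy with Lemma~\ref{lemma:app-var} breaks precisely here: in Lemma~\ref{lemma:app-var} the argument is a \emph{fresh variable}, so $M\subst{v}{x}$ is a renaming, not a substitution; with an arbitrary $Q$ this shortcut disappears. (As a side remark, the $(\mathsf{R}\cap)$ case never arises since $\fD_1\to\fD_2$ is not an intersection.)

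The paper resolves the circularity by a different induction scheme: a \emph{main} induction on the total number of $\to$ and $\cap$ in $\fA_1,\dots,\fA_m$, with a \emph{subinduction} on the derivation length. In the $(\mathsf{L}\to)$-on-$x$ case, with $\fA_j=\fC_1\to\fC_2$, it first applies the subinduction hypothesis to both premisses (eliminating $x$), then invokes Lemma~\ref{lemma:app-var} on the assumption $\GA\vds N:\fC_1\to\fC_2$ to obtain $\GA,z:\fC_1\vds Nz:\fC_2$ for fresh $z$. Now two applications of the \emph{main} induction hypothesis --- substituting for $z$ (type $\fC_1$) and then for $y$ (type $\fC_2$), both strictly simpler than $\fC_1\to\fC_2$ --- give the result. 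The $(\mathsf{L}\cap)$-on-$x$ case is handled similarly via Lemma~\ref{lemma:inters-inv} and the main hypothesis. So the fix is not a stronger side lemma but a lexicographic measure that lets the substitution lemma call itself at smaller types.
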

\begin{proof}
 The proof is by main induction on the number of `$\to$' and `$\cap$'
 occurring in $\fA_1,\dots,\fA_m$ and subinduction on the length of
 the derivation of $\GA,x:\fA_1,\dots,x:\fA_m\vds P:\fB$.  We proceed
 by case analysis according to the last rule used in the derivation of
 $\GA,x:\fA_1,\dots,x:\fA_m\vds P:\fB$.  Here we consider 
 a few cases.
 \begin{itemize}
  \renewcommand{\labelitemi}{$\bullet$}
  \item Suppose the last rule in the derivation is
	$$
	\infer[(\mathsf{Beta})^{\mathit{s}}]{\GA,\overline{x}:\overline{\fA}\vds
	(\lam{y}{M})QN_1\dots N_n:\fB}{
	 \GA,\overline{x}:\overline{\fA}\vds M\subst{y}{Q}N_1\dots N_n:\fB &
	 \GA,\overline{x}:\overline{\fA}\vds Q:\fC}
	$$
	where $\overline{x}:\overline{\fA}=
        x:\fA_1,\dots,x:\fA_m$.
	By the subinduction hypothesis, we obtain both
	$$
	\GA\vds	M\subst{y}{Q}\subst{x}{N}N_1\subst{x}{N}\dots
	N_n\subst{x}{N}:\fB
	$$
	and
	$$
	\GA\vds Q\subst{x}{N}:\fC
	$$
	Since $y$ is a bound variable, we can assume that
	it does not occur in $N$.  Hence the first judgement is
	$$
	\GA\vds	M\subst{x}{N}\subst{y}{Q\subst{x}{N}}N_1\subst{x}{N}\dots
	N_n\subst{x}{N}:\fB
	$$
	From this and $\GA\vds Q\subst{x}{N}:\fC$, we obtain
	$$
	\GA\vds(\lam{y}{M}\subst{x}{N})Q\subst{x}{N}N_1\subst{x}{N}\dots
	N_n\subst{x}{N}:\fB
	$$
	by an instance of the $(\mathsf{Beta})^{\mathit{s}}$ rule.
  \item Suppose the last rule in the derivation is
	$$
	\infer[(\mathsf{L}\to)]{\GA,\overline{x}:\overline{\fA},
	x:\fC_1\to\fC_2\vds xMN_1\dots N_n:\fB}{
	 \GA,\overline{x}:\overline{\fA}\vds M:\fC_1 &
	 \GA,\overline{x}:\overline{\fA},y:\fC_2\vds yN_1\dots N_n:\fB}
	$$
	where $\{\overline{x}:\overline{\fA},x:\fC_1\to\fC_2\}=
        \{x:\fA_1,\dots,x:\fA_m\}$,
        $y\notin\FV{N_1}\cup\dots\cup\FV{N_n}$ and
        $y\notin\GA,\overline{x}:\overline{\fA}$.  By the subinduction
        hypothesis, we obtain both
	\begin{equation}\label{eq:c}
	 \GA\vds M\subst{x}{N}:\fC_1
	\end{equation}
	and
	\begin{equation}\label{eq:d}
	 \GA,y:\fC_2\vds (yN_1\dots N_n)\subst{x}{N}:\fB
	\end{equation}
	Now consider the assumption $\GA\vds N:\fC_1\to\fC_2$ and a
        fresh variable $z$.  Then by Lemma~\ref{lemma:app-var}, we
        have $\GA,z:\fC_1\vds Nz:\fC_2$.  From this and (\ref{eq:c}),
        we have $\GA\vds NM\subst{x}{N}:\fC_2$ by the main induction
        hypothesis.  Then, again by the main induction
	hypothesis, we obtain
	$$
	\GA\vds	NM\subst{x}{N}N_1\subst{x}{N}\dots N_n\subst{x}{N}:\fB
	$$
	from (\ref{eq:d}) and $\GA\vds NM\subst{x}{N}:\fC_2$.
  \item Suppose the last rule in the derivation is
	$$
	\infer[(\mathsf{L}\,\cap)]{\GA,\overline{x}:\overline{\fA},
	x:\fC_1\cap\fC_2\vds xN_1\dots N_n:\fB}{
	 \GA,\overline{x}:\overline{\fA},x:\fC_1,x:\fC_2\vds
	 xN_1\dots N_n:\fB}
	$$
	where $\{\overline{x}:\overline{\fA},x:\fC_1\cap\fC_2\}=
        \{x:\fA_1,\dots,x:\fA_m\}$.  Then, applying
        Proposition~\ref{proposition:GA_cap} to the conclusion, we
        have $\GA,(\overline{x}:\overline{\fA})',x:\fC_1,x:\fC_2\vds
        xN_1\dots N_n:\fB$ where $(\overline{x}:\overline{\fA})'=
        \overline{x}:\overline{\fA}\setminus\{x:\fC_1\cap\fC_2\}$.
        Now, from the assumption $\GA\vds N:\fC_1\cap\fC_2$, we have
        $\GA\vds N:\fC_1$ and $\GA\vds N:\fC_2$ by
        Lemma~\ref{lemma:inters-inv}.  Hence, by the main induction
        hypothesis, we obtain $\GA\vds NN_1\subst{x}{N}\dots
        N_n\subst{x}{N}:\fB$.  \qed
 \end{itemize}
\end{proof}


Now we can show that the system $\LA_{\cap}^{\mathit{s}}$ is closed
under the $(\to\mathsf{E})$ rule.

\begin{lemma}\label{lemma:app}
 If $\GA\vds M:\fA\to\fB$ and $\GA\vds N:\fA$ then $\GA\vds MN:\fB$.
\end{lemma}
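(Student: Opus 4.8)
The plan is to proceed by induction on the derivation of $\GA\vds M:\fA\to\fB$, mirroring the structure of the proof of Lemma~\ref{lemma:app-var} but replacing the fresh variable $x$ by the term $N$. First I would pick a fresh variable $z$ with $z\notin\GA$ and $z\notin\FV{N}$. By Lemma~\ref{lemma:app-var} applied to $\GA\vds M:\fA\to\fB$ we get $\GA,z:\fA\vds Mz:\fB$. Then, since $\GA\vds N:\fA$, the closure under substitution (Lemma~\ref{lemma:subst-s}, with $m=1$) yields $\GA\vds (Mz)\subst{z}{N}:\fB$, and because $z$ is fresh for $M$ this term is exactly $MN$. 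Thus $\GA\vds MN:\fB$.

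The only subtlety is making sure the hypotheses of Lemma~\ref{lemma:subst-s} are met: we need $z\notin\GA$, which holds by the choice of $z$; we need the list of types assigned to $z$ to be non-repeating, which is trivial since there is just one; and we need $\GA\vds N:\fA$, which is the second hypothesis of the lemma. The substitution $(Mz)\subst{z}{N}$ equals $M\subst{z}{N}\,N\subst{z}{N}$, and since $z\notin\FV{M}$ this is $M\,N\subst{z}{N}$; as $N$ is closed for $z$ up to $\alpha$-renaming (and $z$ was chosen fresh for $N$), $N\subst{z}{N}\equiv N$, so the result is $MN$ as required.

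I do not expect any real obstacle here: the work has all been pushed into Lemmas~\ref{lemma:app-var} and~\ref{lemma:subst-s}. If anything, the point worth stating carefully is just the freshness bookkeeping for $z$, so that the substitution collapses cleanly to $MN$. The proof is therefore a two-line composition of the two preceding lemmas.
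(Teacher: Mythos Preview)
Your argument is correct and is exactly the paper's proof: apply Lemma~\ref{lemma:app-var} with a fresh variable, then Lemma~\ref{lemma:subst-s} to substitute $N$, collapsing $(Mz)\subst{z}{N}$ to $MN$. The opening sentence about proceeding ``by induction on the derivation'' is misleading, though, since no induction is actually performed---just drop that phrase.
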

\begin{proof}
 By Lemma~\ref{lemma:app-var}, we have $\GA,x:\fA\vds Mx:\fB$ for any
 fresh variable $x$.  Hence by the previous lemma, we obtain $\GA\vds
 (Mx)\subst{x}{N}\equiv MN:\fB$.  \qed
\end{proof}

Now we can prove the announced theorem.

\begin{theorem}\label{theorem:n-to-s}
 If $\GA\vd M:\fA$ then $\GA\vds M:\fA$.
\end{theorem}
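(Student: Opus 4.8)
The plan is to proceed by induction on the derivation of $\GA\vd M:\fA$ in $\LA_{\cap}$, showing in each case that $\LA_{\cap}^{\mathit{s}}$ is closed under the rule of $\LA_{\cap}$ applied last. Concretely, I would treat the six rules of Figure~\ref{table:lambda-inters} as follows. For $(\mathsf{Ax})$ there is nothing to prove, since $(\mathsf{Ax})$ is literally a rule of $\LA_{\cap}^{\mathit{s}}$ as well. For $(\to\mathsf{I})$, the induction hypothesis gives $\GA,x:\fA\vds M:\fB$ (and the side condition $x\notin\GA$ carries over), so one application of $(\mathsf{R}\to)$ yields $\GA\vds\lam{x}{M}:\fA\to\fB$. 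For $(\cap\,\mathsf{I})$, the induction hypothesis gives $\GA\vds M:\fA$ and $\GA\vds M:\fB$, and $(\mathsf{R}\,\cap)$ produces $\GA\vds M:\fA\cap\fB$. For the two $(\cap\,\mathsf{E})$ rules, the induction hypothesis gives $\GA\vds M:\fA\cap\fB$, and Lemma~\ref{lemma:inters-inv} delivers $\GA\vds M:\fA$ and $\GA\vds M:\fB$.

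The remaining case is $(\to\mathsf{E})$: the induction hypothesis gives $\GA\vds M:\fA\to\fB$ and $\GA\vds N:\fA$, and Lemma~\ref{lemma:app} immediately yields $\GA\vds MN:\fB$. Thus every case is either trivial or a one-line appeal to a previously established lemma, and the argument is entirely routine.

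Consequently, I do not expect any obstacle in this theorem itself. The real work has already been carried out in the chain of lemmas leading to Lemma~\ref{lemma:app}, and in particular in the substitution lemma (Lemma~\ref{lemma:subst-s}), whose proof requires the auxiliary Lemmas~\ref{lemma:app-var} and \ref{lemma:inters-inv} together with a delicate double induction on the number of connectives in the substituted types and on derivation length. Once those are in place, the present theorem is just a matter of assembling the pieces, and, combined with Theorem~\ref{theorem:sn-s}, it establishes strong normalisation for every $\l$-term typable in $\LA_{\cap}$.
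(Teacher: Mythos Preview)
Your proposal is correct and matches the paper's own proof exactly: an induction on the $\LA_{\cap}$ derivation, with the $(\cap\,\mathsf{E})$ cases handled by Lemma~\ref{lemma:inters-inv} and the $(\to\mathsf{E})$ case by Lemma~\ref{lemma:app}, the remaining cases being immediate. Your additional remarks about where the real work lies (Lemmas~\ref{lemma:app-var}, \ref{lemma:inters-inv}, \ref{lemma:subst-s}) are also accurate.
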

\begin{proof}
 By induction on the derivation of $\GA\vd M:\fA$ in $\LA_{\cap}$, using
 Lemmas~\ref{lemma:inters-inv} and \ref{lemma:app}.  \qed
\end{proof}


The converse of this theorem also holds when typing contexts are
restricted to those of $\LA_{\cap}$.  To prove it, we need some lemmas
on properties of the system $\LA_{\cap}$.

\begin{lemma}\label{lemma:weakening-n}
 If $\GA\vd M:\fB$ and $z\notin\GA$ then $\GA,z:\fA\vd M:\fB$. 
\end{lemma}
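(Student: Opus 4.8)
The plan is to proceed by a routine induction on the derivation of $\GA\vd M:\fB$ in $\LA_{\cap}$, pushing the fresh variable $z:\fA$ through every rule. Since $z\notin\GA$, the enlarged context $\GA,z:\fA$ is still a legitimate $\LA_{\cap}$ typing context (the variables stay pairwise distinct), so nothing in the side conditions of the rules breaks.

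First I would dispatch the base case $(\mathsf{Ax})$: from $\GA,x:\fB\vd x:\fB$ we immediately get $\GA,z:\fA,x:\fB\vd x:\fB$, again an instance of $(\mathsf{Ax})$, using $z\neq x$ (which holds because $z\notin\GA$ and we may assume $z$ is genuinely fresh, in particular $z\not\equiv x$). The cases $(\cap\,\mathsf{I})$, $(\cap\,\mathsf{E})$ and $(\to\mathsf{E})$ are purely congruential: apply the induction hypothesis to each premiss to insert $z:\fA$, then reapply the same rule. For $(\to\mathsf{I})$, with conclusion $\GA\vd\lam{x}{M}:\fC\to\fB$ obtained from $\GA,x:\fC\vd M:\fB$ with $x\notin\GA$: by $\alpha$-conversion I may assume the bound variable $x$ differs from $z$ and from the free variables, so $x\notin\GA,z:\fA$; the induction hypothesis (applied with the fresh variable $z$, which is still fresh for the context $\GA,x:\fC$) gives $\GA,z:\fA,x:\fC\vd M:\fB$, and one application of $(\to\mathsf{I})$ yields $\GA,z:\fA\vd\lam{x}{M}:\fC\to\fB$.

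There is no real obstacle here; the only point requiring a word of care is the bound-variable condition in the $(\to\mathsf{I})$ case, which is handled by the standard convention of identifying $\alpha$-convertible terms and choosing bound variables fresh with respect to all names in play (so that $x\notin\GA$ can be strengthened to $x\notin\GA,z:\fA$). This is entirely analogous to Lemma~\ref{lemma:weakening-s} for the sequent calculus system.

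\begin{proof}
 By induction on the derivation of $\GA\vd M:\fB$, in each case applying the induction hypothesis to the premisses and reapplying the last rule; in the case of $(\to\mathsf{I})$ the bound variable is chosen, by $\alpha$-conversion, distinct from $z$.  \qed
\end{proof}
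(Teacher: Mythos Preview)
Your proof is correct and follows exactly the approach taken in the paper, namely a straightforward induction on the derivation of $\GA\vd M:\fB$. The paper's own proof is the one-line ``By induction on the derivation of $\GA\vd M:\fB$''; you have simply spelled out the cases, including the usual $\alpha$-conversion remark for $(\to\mathsf{I})$.
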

\begin{proof}
 By induction on the derivation of $\GA\vd M:\fB$.  \qed
\end{proof}

\begin{lemma}\label{lemma:subst-n}
 $\LA_{\cap}$ is closed under substitution, i.e., if $\GA,x:\fA\vd
 P:\fB$ where $x\notin\GA$ and $\GA\vd N:\fA$ then $\GA\vd
 P\subst{x}{N}:\fB$.
\end{lemma}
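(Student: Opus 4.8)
The plan is to argue by induction on the derivation of $\GA,x:\fA\vd P:\fB$ in $\LA_{\cap}$, with a case analysis on the last rule applied. Throughout, by $\alpha$-conversion we may assume that every bound variable occurring in the derivation is distinct from $x$ and does not occur free in $N$.

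First I would treat the base case $(\mathsf{Ax})$. Here $P$ is a variable, say $P\equiv y$, and the context $\GA,x:\fA$ has the form $\GA',y:\fB$. If $y\equiv x$, then since the variables of a $\LA_{\cap}$-context are pairwise distinct we must have $\fB\equiv\fA$, and $P\subst{x}{N}\equiv N$, so the required judgement $\GA\vd N:\fA$ is precisely the hypothesis. If $y\not\equiv x$, then $y:\fB\in\GA$, and $P\subst{x}{N}\equiv y$ is derivable by $(\mathsf{Ax})$.

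For the induction steps, the rules $(\to\mathsf{E})$, $(\cap\,\mathsf{I})$ and $(\cap\,\mathsf{E})$ are immediate: the context $\GA,x:\fA$ is unchanged in the premisses, so one applies the induction hypothesis to each premiss and reapplies the same rule, using that substitution commutes with application. The one case needing care is $(\to\mathsf{I})$: from $\GA,x:\fA,y:\fB_1\vd M:\fB_2$ with $y\notin\GA,x:\fA$ we want to invoke the induction hypothesis with the context regarded as $(\GA,y:\fB_1),x:\fA$; this requires $\GA,y:\fB_1\vd N:\fA$, which we obtain from $\GA\vd N:\fA$ by Lemma~\ref{lemma:weakening-n} (applicable since $y\notin\GA$). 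The induction hypothesis then yields $\GA,y:\fB_1\vd M\subst{x}{N}:\fB_2$, and $(\to\mathsf{I})$ gives $\GA\vd\lam{y}{M\subst{x}{N}}:\fB_1\to\fB_2$; finally $\lam{y}{M\subst{x}{N}}\equiv(\lam{y}{M})\subst{x}{N}$ because $y\not\equiv x$ and $y\notin\FV{N}$.

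I do not expect a genuine obstacle here: this is a routine structural induction, of the kind that goes through smoothly precisely because $\LA_{\cap}$ is a natural deduction style system with contexts of pairwise distinct variables. The only points demanding attention are the split of the axiom case into $P\equiv x$ versus $P\not\equiv x$, and the use of weakening to extend $\GA\vd N:\fA$ before descending under a $\lambda$ in the $(\to\mathsf{I})$ case.
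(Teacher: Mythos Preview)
Your proposal is correct and follows exactly the approach indicated in the paper, namely a straightforward induction on the derivation of $\GA,x:\fA\vd P:\fB$; the paper simply omits the case analysis that you spell out. Your handling of the $(\to\mathsf{I})$ case via Lemma~\ref{lemma:weakening-n} and the $\alpha$-conversion convention is the standard way to make this induction go through.
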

\begin{proof}
 By induction on the derivation of $\GA,x:\fA\vd P:\fB$.  \qed
\end{proof}

Next we prove a Generation Lemma.  For its statement we define a
preorder on types.

\begin{definition}\label{def:le}
 The relation $\le$ on types is defined by the following axioms and
 rules:
 \begin{align*}
  1.\ \,& \fA\le\fA & 3.\ \,& \fA\le\fB,\
  \fB\le\fC\,\Rightarrow\,\fA\le\fC \\
  2.\ \,& \fA\cap\fB\le\fA,\ \fA\cap\fB\le\fB & 4.\ \,& \fA\le\fB,\
  \fA\le\fC\,\Rightarrow\,\fA\le\fB\cap\fC
 \end{align*}
\end{definition}

\begin{lemma}\label{lemma:le-n}
 If $\GA\vd M:\fA$ and $\fA\le\fB$ then $\GA\vd M:\fB$.
\end{lemma}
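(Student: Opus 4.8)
The plan is to prove Lemma~\ref{lemma:le-n} by induction on the derivation of $\fA\le\fB$, following the axioms and rules of Definition~\ref{def:le}. This is the natural strategy since $\le$ is defined inductively, and each of the four clauses corresponds to one case.

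First, I would handle the base cases. For axiom~1, $\fA\le\fA$, there is nothing to do: from $\GA\vd M:\fA$ we already have $\GA\vd M:\fA$. For axiom~2, say $\fA\cap\fB\le\fA$, we are given $\GA\vd M:\fA\cap\fB$, and applying the rule $(\cap\,\mathsf{E})$ of $\LA_{\cap}$ yields $\GA\vd M:\fA$ directly; the case $\fA\cap\fB\le\fB$ is symmetric using the other $(\cap\,\mathsf{E})$ rule.

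Next, the inductive cases. For rule~3 (transitivity), suppose $\fA\le\fB$ and $\fB\le\fC$ with the conclusion $\fA\le\fC$. Given $\GA\vd M:\fA$, the induction hypothesis applied to $\fA\le\fB$ gives $\GA\vd M:\fB$, and then the induction hypothesis applied to $\fB\le\fC$ gives $\GA\vd M:\fC$. For rule~4, suppose $\fA\le\fB$ and $\fA\le\fC$ with the conclusion $\fA\le\fB\cap\fC$. Given $\GA\vd M:\fA$, the two induction hypotheses yield $\GA\vd M:\fB$ and $\GA\vd M:\fC$, and applying $(\cap\,\mathsf{I})$ gives $\GA\vd M:\fB\cap\fC$.

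I do not expect any real obstacle here: every case is a one- or two-line application of an existing rule of $\LA_{\cap}$ or of the induction hypothesis, and the preorder $\le$ has been chosen precisely so that its generating clauses mirror the intersection rules. The mild point worth noting is that the induction must be on the derivation of $\fA\le\fB$ (not on types or on the $\lambda$-term), so that in the transitivity case the intermediate type $\fB$ is available and both subderivations are structurally smaller. No weakening or substitution lemmas are needed.
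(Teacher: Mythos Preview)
Your proposal is correct and matches the paper's own proof, which simply states ``By induction on the definition of $\fA\le\fB$.'' You have accurately filled in the four cases, each of which is indeed handled by the corresponding $(\cap\,\mathsf{E})$ or $(\cap\,\mathsf{I})$ rule of $\LA_{\cap}$ or by the induction hypothesis.
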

\begin{proof}
 By induction on the definition of $\fA\le\fB$.  \qed
\end{proof}

\begin{lemma}[Generation Lemma]\label{lemma:generation-n}\strut
 \begin{enumerate}
  \item $\GA\vd MN:\fA$ if and only if there exist
	$\fA_1,\dots,\fA_n,\fB_1,\dots,\fB_n$ $(n\ge 1)$ such that
	$\fA_1\cap\dots\cap\fA_n\le\fA$ and, for all $i\in\{1,\dots,n\}$,
	$\GA\vd M:\fB_i\to\fA_i$ and $\GA\vd N:\fB_i$.
  \item $\GA\vd\lam{x}{M}:\fA$ if and only if there exist
	$\fB_1,\dots,\fB_n,\fC_1,\dots,\fC_n$ $(n\ge 1)$ such that
	$(\fB_1\to\fC_1)\cap\dots\cap(\fB_n\to\fC_n)\le\fA$ and, for all
	$i\in\{1,\dots,n\}$, $\GA,x:\fB_i\vd M:\fC_i$.
 \end{enumerate}
\end{lemma}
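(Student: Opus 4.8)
The plan is to prove both statements by induction on the derivation of the relevant typing judgement in $\LA_{\cap}$, with the ``if'' direction being an easy consequence of Lemmas~\ref{lemma:le-n}, the $(\cap\,\mathsf{I})$-rule, and the $(\to\mathsf{E})$-rule, so that the real work is in the ``only if'' direction. For part~(1), I would proceed by induction on the derivation of $\GA\vd MN:\fA$, examining the last rule applied. The last rule can only be $(\to\mathsf{E})$, $(\cap\,\mathsf{I})$, or $(\cap\,\mathsf{E})$, since $(\mathsf{Ax})$ and $(\to\mathsf{I})$ have conclusions of the wrong shape. In the $(\to\mathsf{E})$ case the premisses give directly $\GA\vd M:\fB\to\fA$ and $\GA\vd N:\fB$, so we take $n=1$, $\fA_1=\fA$, $\fB_1=\fB$, and $\fA_1\le\fA$ by reflexivity. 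In the $(\cap\,\mathsf{I})$ case, $\fA=\fA'\cap\fA''$ and we apply the induction hypothesis to both premisses $\GA\vd MN:\fA'$ and $\GA\vd MN:\fA''$, obtaining two families; concatenating them yields a family whose intersection is below $\fA'\cap\fA''=\fA$ by rule~4 of Definition~\ref{def:le} (combined with rule~3). In the $(\cap\,\mathsf{E})$ case, $\GA\vd MN:\fA\cap\fC$ (or $\fC\cap\fA$) is the premiss; the induction hypothesis gives a family with intersection below $\fA\cap\fC$, and since $\fA\cap\fC\le\fA$ by rule~2, rule~3 gives intersection below $\fA$, so the same family works.

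For part~(2), the argument is structurally identical but simpler, since the only rules whose conclusion can have the form $\GA\vd\lam{x}{M}:\fA$ are $(\to\mathsf{I})$, $(\cap\,\mathsf{I})$, and $(\cap\,\mathsf{E})$. In the $(\to\mathsf{I})$ case we read off $\GA,x:\fB\vd M:\fC$ with $\fA=\fB\to\fC$ and take $n=1$. The $(\cap\,\mathsf{I})$ and $(\cap\,\mathsf{E})$ cases are handled exactly as in part~(1), using rules~2,~3,~4 of the preorder to manipulate the intersection on the left-hand side of $\le$. One mild care point: in the $(\to\mathsf{I})$ case the side condition $x\notin\GA$ must be respected, but since $\alpha$-equivalence lets us choose the bound variable freshly this is unproblematic.

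The ``if'' directions: given the families, for each $i$ we have $\GA\vd M:\fB_i\to\fA_i$ and $\GA\vd N:\fB_i$, hence $\GA\vd MN:\fA_i$ by $(\to\mathsf{E})$; applying $(\cap\,\mathsf{I})$ repeatedly gives $\GA\vd MN:\fA_1\cap\dots\cap\fA_n$, and then Lemma~\ref{lemma:le-n} together with $\fA_1\cap\dots\cap\fA_n\le\fA$ gives $\GA\vd MN:\fA$. Part~(2) is the same with $(\to\mathsf{I})$ in place of $(\to\mathsf{E})$: from $\GA,x:\fB_i\vd M:\fC_i$ we get $\GA\vd\lam{x}{M}:\fB_i\to\fC_i$, combine with $(\cap\,\mathsf{I})$ and Lemma~\ref{lemma:le-n}.

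The main obstacle, such as it is, lies entirely in bookkeeping the intersection preorder in the $(\cap\,\mathsf{I})$ and $(\cap\,\mathsf{E})$ cases of the ``only if'' direction: one must check that concatenating two families and splitting/weakening via rules~2--4 actually produces a family meeting the stated condition $\fA_1\cap\dots\cap\fA_n\le\fA$. This is routine once one notes that $\le$ is closed under the obvious monotonicity properties for $\cap$, which follow from rules~1--4; no genuinely hard step arises, and in particular no reducibility argument is needed here.
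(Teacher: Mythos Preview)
Your proposal is correct and follows exactly the approach indicated in the paper: the right-to-left implications are obtained from the typing rules together with Lemma~\ref{lemma:le-n}, and the left-to-right implications are proved by induction on the derivation, with the case analysis you describe. The paper gives no further detail, so your elaboration of the $(\to\mathsf{E})$, $(\cap\,\mathsf{I})$ and $(\cap\,\mathsf{E})$ cases is precisely the routine argument it leaves implicit.
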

\begin{proof}
 The implications from right to left are immediate by the typing rules
 and Lemma~\ref{lemma:le-n}.  The converses are shown by induction on
 the derivations.  \qed
\end{proof}

Now we can prove a crucial lemma about type-checking in the system
$\LA_{\cap}$.

\begin{lemma}\label{lemma:inv-subst-n}
 If $\GA\vd M\subst{x}{N}:\fA$ and $\GA\vd N:\fB$ where $x\notin\GA$
 then there exists a type $\fC$ such that $\GA,x:\fC\vd M:\fA$ and
 $\GA\vd N:\fC$.
\end{lemma}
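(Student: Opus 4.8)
The plan is to argue by induction on the structure of the term $M$ --- not on the derivation of $\GA\vd M\subst{x}{N}:\fA$, since substitution does not preserve the shape of a derivation --- using the Generation Lemma (Lemma~\ref{lemma:generation-n}) to peel off the outermost constructor of $M\subst{x}{N}$ in the inductive cases. I would first record two auxiliary facts about $\LA_{\cap}$, each by a routine induction on the derivation: a \emph{thinning} lemma, that $z\notin\FV{M}$ and $\GA,z:\fC\vd M:\fA$ imply $\GA\vd M:\fA$; and a \emph{context-strengthening} lemma along $\le$, that $\GA,x:\fC'\vd M:\fA$ and $\fC\le\fC'$ imply $\GA,x:\fC\vd M:\fA$, whose only non-routine case is an axiom $\GA,x:\fC'\vd x:\fC'$, replaced by the axiom $\GA,x:\fC\vd x:\fC$ followed by Lemma~\ref{lemma:le-n}. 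The base cases are then easy: if $M\equiv x$ then $M\subst{x}{N}\equiv N$, so $\GA\vd N:\fA$ and we take $\fC:=\fA$, with $\GA,x:\fA\vd x:\fA$ by $(\mathsf{Ax})$; if $M\equiv y$ with $y\not\equiv x$ then $M\subst{x}{N}\equiv y$ and we take $\fC:=\fB$ (the one place the hypothesis $\GA\vd N:\fB$ is used), obtaining $\GA,x:\fB\vd y:\fA$ from $\GA\vd y:\fA$ by Lemma~\ref{lemma:weakening-n}.

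For the application case $M\equiv M_1M_2$, Lemma~\ref{lemma:generation-n}(1) applied to $\GA\vd(M_1\subst{x}{N})(M_2\subst{x}{N}):\fA$ yields $n\ge1$ and types $\fA_i,\fB_i$ with $\fA_1\cap\dots\cap\fA_n\le\fA$ and, for each $i$, $\GA\vd M_1\subst{x}{N}:\fB_i\to\fA_i$ and $\GA\vd M_2\subst{x}{N}:\fB_i$. The induction hypothesis applied to $M_1$ and to $M_2$, each time together with $\GA\vd N:\fB$, gives types $\fC_i$ and $\fC'_i$ with $\GA,x:\fC_i\vd M_1:\fB_i\to\fA_i$, $\GA,x:\fC'_i\vd M_2:\fB_i$, and $\GA\vd N:\fC_i$, $\GA\vd N:\fC'_i$. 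Setting $\fC:=\fC_1\cap\fC'_1\cap\dots\cap\fC_n\cap\fC'_n$, we get $\GA\vd N:\fC$ by repeated $(\cap\,\mathsf{I})$, and $\fC\le\fC_i$, $\fC\le\fC'_i$ for each $i$, so context-strengthening gives $\GA,x:\fC\vd M_1:\fB_i\to\fA_i$ and $\GA,x:\fC\vd M_2:\fB_i$; then $(\to\mathsf{E})$, $(\cap\,\mathsf{I})$ and Lemma~\ref{lemma:le-n} give $\GA,x:\fC\vd M_1M_2:\fA$.

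The abstraction case $M\equiv\lam{y}{M_0}$ is analogous via Lemma~\ref{lemma:generation-n}(2). After $\alpha$-renaming so that $y\notin\GA$ and $y\notin\FV{N}$, decompose $\GA\vd\lam{y}{M_0\subst{x}{N}}:\fA$ into $n\ge1$ and types $\fB_i,\fC_i$ with $(\fB_1\to\fC_1)\cap\dots\cap(\fB_n\to\fC_n)\le\fA$ and $\GA,y:\fB_i\vd M_0\subst{x}{N}:\fC_i$ for each $i$. Weaken $\GA\vd N:\fB$ to $\GA,y:\fB_i\vd N:\fB$ (Lemma~\ref{lemma:weakening-n}) so as to apply the induction hypothesis to $M_0$ in the context $\GA,y:\fB_i$, obtaining $\fC'_i$ with $\GA,y:\fB_i,x:\fC'_i\vd M_0:\fC_i$ and $\GA,y:\fB_i\vd N:\fC'_i$; by thinning, $\GA\vd N:\fC'_i$. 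With $\fC:=\fC'_1\cap\dots\cap\fC'_n$ we have $\GA\vd N:\fC$ by $(\cap\,\mathsf{I})$, and $\fC\le\fC'_i$, so context-strengthening gives $\GA,x:\fC,y:\fB_i\vd M_0:\fC_i$; then $(\to\mathsf{I})$ (licit since $y\notin\GA,x:\fC$), $(\cap\,\mathsf{I})$ and Lemma~\ref{lemma:le-n} give $\GA,x:\fC\vd\lam{y}{M_0}:\fA$.

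The main obstacle is the gluing step common to the two inductive cases. The Generation Lemma splits the target type into a conjunction of arrows, and each conjunct --- and, for an application, each of the two subterms --- yields through the induction hypothesis its own type for $N$; these must be merged into a single type $\fC$ without losing typability of $M$, which is precisely the role of the $\le$-strengthening lemma. The abstraction case carries the additional wrinkle that the induction hypothesis has to be used in the enlarged context $\GA,y:\fB_i$, so weakening is needed to supply the premise about $N$ and thinning to bring the resulting types for $N$ back down to $\GA$.
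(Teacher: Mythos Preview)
Your proposal is correct and follows exactly the route the paper indicates: induction on the structure of $M$, with the Generation Lemma (Lemma~\ref{lemma:generation-n}) driving the inductive cases. The paper merely states this in one line, whereas you have filled in the details---including the auxiliary thinning and $\le$-strengthening lemmas needed to merge the types produced by the separate applications of the induction hypothesis---so there is nothing to add.
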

\begin{proof}
 By induction on the structure of $M$, using
 Lemma~\ref{lemma:generation-n}.  \qed
\end{proof}

We are now ready to prove the equivalence between the systems
$\LA_{\cap}^{\mathit{s}}$ and $\LA_{\cap}$.


\begin{theorem}\label{theorem:s-eq-n}
 $\GA\vds M:\fA$ if and only if $\GA_{\cap}\vd M:\fA$.
\end{theorem}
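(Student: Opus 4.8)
The plan is to prove the two directions separately. For the left-to-right direction, I would argue by induction on the derivation of $\GA\vds M:\fA$ in $\LA_{\cap}^{\mathit{s}}$, showing at each step that $\GA_{\cap}\vd M:\fA$ holds in $\LA_{\cap}$. The cases $(\mathsf{Ax})$, $(\mathsf{R}\to)$ and $(\mathsf{R}\,\cap)$ are routine: for $(\mathsf{Ax})$ one uses that $\GA_{\cap}$ assigns to $x$ an intersection that has $\fA$ as a component, so Lemma~\ref{lemma:le-n} together with $(\cap\,\mathsf{E})$ gives $x:\fA$; for the right rules one just reassembles the natural-deduction rule, being mildly careful that $(\cdot)_{\cap}$ commutes appropriately with adding a binding $x:\fA$ (here one should note $x\notin\GA$). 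The interesting cases are $(\mathsf{Beta})^{\mathit{s}}$, $(\mathsf{L}\to)$ and $(\mathsf{L}\,\cap)$. For $(\mathsf{L}\,\cap)$, Proposition~\ref{proposition:GA_cap} is essentially the point: $(\GA,x:\fC_1\cap\fC_2)_{\cap}$ and $(\GA,x:\fC_1,x:\fC_2)_{\cap}$ give the same $\LA_{\cap}$-context, so the IH transfers directly. For $(\mathsf{L}\to)$, from the IH I get $\GA_{\cap}\vd N:\fC_1$ and $(\GA,y:\fC_2)_{\cap}\vd yN_1\dots N_n:\fB$; since $y$ is fresh, $(\GA,y:\fC_2)_{\cap}=\GA_{\cap},y:\fC_2$, and replacing the leading occurrence of $y$ by $x$ (whose type in $(\GA,x:\fC_1\to\fC_2)_{\cap}$ includes $\fC_1\to\fC_2$ as a conjunct, hence by $(\cap\,\mathsf{E})$ gives $x:\fC_1\to\fC_2$) plus one $(\to\mathsf{E})$ application against $\GA_{\cap}\vd N:\fC_1$ yields $\GA_{\cap},\dots\vd xN\dots N_n:\fB$ after weakening (Lemma~\ref{lemma:weakening-n}) and Lemma~\ref{lemma:subst-n} for the $y\mapsto x$ substitution. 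For $(\mathsf{Beta})^{\mathit{s}}$ I get by IH $\GA_{\cap}\vd M\subst{x}{N}N_1\dots N_n:\fB$ and $\GA_{\cap}\vd N:\fC$, and I must recover a type for the redex $(\lam{x}{M})NN_1\dots N_n$; here Lemma~\ref{lemma:inv-subst-n} supplies a type $\fC'$ with $\GA_{\cap},x:\fC'\vd M:\dots$ and $\GA_{\cap}\vd N:\fC'$ (strictly speaking one peels off $N_1,\dots,N_n$ first via the Generation Lemma, applies inversion of substitution at the head, then reassembles with $(\to\mathsf{I})$ and $(\to\mathsf{E})$'s), giving $\GA_{\cap}\vd (\lam{x}{M})NN_1\dots N_n:\fB$.

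For the right-to-left direction, suppose $\GA_{\cap}\vd M:\fA$. By Theorem~\ref{theorem:n-to-s} we get $\GA_{\cap}\vds M:\fA$ in $\LA_{\cap}^{\mathit{s}}$, and it remains to pass from the context $\GA_{\cap}$ back to $\GA$. Since $\GA$ and $\GA_{\cap}$ differ only in that $\GA$ may list a variable with several types $\fA_1,\dots,\fA_k$ where $\GA_{\cap}$ lists it once with $\fA_1\cap\dots\cap\fA_k$, I would repeatedly apply the $(\mathsf{L}\,\cap)$ direction of Proposition~\ref{proposition:GA_cap} — or rather its contrapositive reading — to split each such intersection binding back into its conjuncts. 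Concretely, from $\GA',x:\fA_1\cap\dots\cap\fA_k\vds M:\fA$ one peels conjuncts off one at a time: Proposition~\ref{proposition:GA_cap} converts $x:(\fA_1\cap\dots\cap\fA_{k-1})\cap\fA_k$ into $x:\fA_1\cap\dots\cap\fA_{k-1},\,x:\fA_k$, and iterating reaches $x:\fA_1,\dots,x:\fA_k$. Doing this for every variable of $\GA$ turns $\GA_{\cap}\vds M:\fA$ into $\GA\vds M:\fA$.

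The main obstacle, and the step deserving the most care, is the $(\mathsf{Beta})^{\mathit{s}}$ case of the left-to-right induction: reconstructing a natural-deduction typing of the redex $(\lam{x}{M})NN_1\dots N_n$ from a typing of its contractum. This is exactly where Lemmas~\ref{lemma:generation-n} and~\ref{lemma:inv-subst-n} do the work, so the proof should be phrased as: strip the arguments $N_1,\dots,N_n$ using Generation Lemma part~1 to isolate a typing $\GA_{\cap}\vd M\subst{x}{N}:\fD$ for the relevant $\fD$ (together with typings of the $N_i$), apply Lemma~\ref{lemma:inv-subst-n} to obtain $\fC'$ with $\GA_{\cap},x:\fC'\vd M:\fD$ and $\GA_{\cap}\vd N:\fC'$, use $(\to\mathsf{I})$ to get $\GA_{\cap}\vd\lam{x}{M}:\fC'\to\fD$, then reattach $N$ and $N_1,\dots,N_n$ by successive $(\to\mathsf{E})$ steps, finishing with Lemma~\ref{lemma:le-n} if an $\le$ was introduced by the Generation Lemma. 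Everything else is a routine structural induction once Proposition~\ref{proposition:GA_cap} is used to mediate between $\cap$-bindings and repeated bindings.
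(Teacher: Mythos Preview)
Your plan is correct and follows essentially the same approach as the paper: the right-to-left direction via Theorem~\ref{theorem:n-to-s} and Proposition~\ref{proposition:GA_cap}, and the left-to-right direction by induction on the $\LA_{\cap}^{\mathit{s}}$-derivation with the key $(\mathsf{Beta})^{\mathit{s}}$ case handled through Lemmas~\ref{lemma:generation-n} and~\ref{lemma:inv-subst-n}. One small inaccuracy: in the $(\mathsf{L}\to)$ case the substitution you need is $y\mapsto xN$ (not $y\mapsto x$), and when $x$ already occurs in $\GA$ you will need a context-strengthening step (replacing $x$'s binding in $\GA_{\cap}$ by the stronger intersection in $(\GA,x:\fC_1\to\fC_2)_{\cap}$), which is an easy consequence of Lemma~\ref{lemma:le-n}; this is a routine detail and does not affect the overall argument.
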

\begin{proof}
 The implication from right to left follows from
 Theorem~\ref{theorem:n-to-s} and
 Proposition~\ref{proposition:GA_cap}.  The converse is shown by
 induction on the derivation of $\GA\vds M:\fA$.  If the last applied
 rule is $(\mathsf{Beta})^{\mathit{s}}$, we use
 Lemmas~\ref{lemma:generation-n} and \ref{lemma:inv-subst-n}.  \qed
\end{proof}


Finally we show that all strongly normalising terms are typable in
$\LA_{\cap}^{\mathit{s}}$.

\begin{theorem}\label{theorem:sn-typable-s}
 If $M\in\SN{\beta}$ then there exist a typing context $\GA$ and a
 type $\fA$ such that $\GA\vds M:\fA$.
\end{theorem}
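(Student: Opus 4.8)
The plan is to prove the statement by induction on a suitable inductive characterisation of the set $\SN{\beta}$, rather than by induction on $\beta$-reduction directly. Recall the standard fact (van Raamsdonk–Severi) that $\SN{\beta}$ is the smallest set of $\lambda$-terms closed under the following clauses: (i) if $M_1,\dots,M_n\in\SN{\beta}$ then $xM_1\dots M_n\in\SN{\beta}$; (ii) if $M\in\SN{\beta}$ then $\lam{x}{M}\in\SN{\beta}$; (iii) if $M\subst{x}{N}N_1\dots N_n\in\SN{\beta}$ and $N\in\SN{\beta}$ then $(\lam{x}{M})NN_1\dots N_n\in\SN{\beta}$. The shape of these three clauses matches almost exactly the rules $(\mathsf{L}\to)$/$(\mathsf{Ax})$, $(\mathsf{R}\to)$ and $(\mathsf{Beta})^{\mathit{s}}$ of $\LA_{\cap}^{\mathit{s}}$, which is precisely why the sequent system was designed; so I would first state this characterisation as a lemma (citing \cite{vRS1995}) and then do induction on the derivation that $M\in\SN{\beta}$.

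The key steps, in order, are as follows. First, the head-variable case $M\equiv xM_1\dots M_n$: by the induction hypothesis each $M_i$ is typable, say $\GA_i\vds M_i:\fA_i$; using Lemma~\ref{lemma:weakening-s} I merge the contexts into a common $\GA$, then build the derivation from the right by repeated use of $(\mathsf{L}\to)$, starting from an axiom $\GA',x:\fB_n\vds xM_{?}\dots$, i.e.\ assign $x$ the type $\fA_1\to(\fA_2\to(\dots\to(\fA_n\to\varphi)))$ for a fresh type variable $\varphi$ and peel off the arguments one at a time. (The base case $n=0$ is just an axiom.) Second, the abstraction case $M\equiv\lam{x}{M_0}$: by the induction hypothesis $\GA\vds M_0:\fB$ for some $\GA,\fB$; if $x$ already occurs in $\GA$ with types $\fC_1,\dots,\fC_k$, collapse them via Proposition~\ref{proposition:GA_cap} (or just weaken in a fresh $x:\fC$ if $x\notin\GA$), and apply $(\mathsf{R}\to)$ to get $\GA\setminus x\vds\lam{x}{M_0}:\fC_1\cap\dots\cap\fC_k\to\fB$ (or $\fC\to\fB$). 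Third, the $\beta$-redex case $M\equiv(\lam{x}{M_0})NN_1\dots N_n$: by the induction hypothesis $M_0\subst{x}{N}N_1\dots N_n$ is typable, say $\Delta\vds M_0\subst{x}{N}N_1\dots N_n:\fA$, and $N$ is typable; weaken both into a common context and apply $(\mathsf{Beta})^{\mathit{s}}$ directly — note the rule is exactly tailored to this clause.

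The main obstacle I anticipate is the bookkeeping in the head-variable case, specifically reconciling the typing context of the conclusion with those supplied by the induction hypotheses. The rule $(\mathsf{L}\to)$ forces $x:\fA_1\to\dots\to\fA_n\to\varphi$ to appear in the conclusion context, but the subderivations for the $M_i$ may themselves already mention $x$ (with whatever types they need); since $\LA_{\cap}^{\mathit{s}}$ allows a variable to occur with several types in a context, and since $(\mathsf{L}\to)$ is stated with implicit contraction, this is harmless, but it needs to be said carefully. A cleaner route is to prove the slightly stronger statement ``for every finite set $X$ of variables there is a context $\GA$ with $\dom(\GA)\supseteq\FV{M}\cup X$ and a type $\fA$ with $\GA\vds M:\fA$'', so that fresh-variable and context-merging manipulations go through uniformly; the extra generality costs nothing in the induction. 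The remaining cases are routine given Lemma~\ref{lemma:weakening-s} and Proposition~\ref{proposition:GA_cap}.
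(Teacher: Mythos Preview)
Your proposal is correct and the case analysis is identical to the paper's: head-variable applications via iterated $(\mathsf{L}\to)$, abstractions via $(\mathsf{L}\cap)$ (or weakening) followed by $(\mathsf{R}\to)$, and head $\beta$-redexes via $(\mathsf{Beta})^{\mathit{s}}$, with Lemma~\ref{lemma:weakening-s} used to merge contexts.

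The only real difference is the packaging of the induction. You invoke the van Raamsdonk--Severi inductive characterisation of $\SN{\beta}$ as a separate lemma and then induct on its clauses; the paper instead performs a direct double induction---main induction on the maximal length of $\beta$-reduction sequences from $M$, subinduction on the structure of $M$---which produces exactly the same four cases without importing the characterisation. The paper's version is marginally more self-contained (no external lemma to cite), while yours makes the match between the clauses of $\SN{\beta}$ and the rules of $\LA_{\cap}^{\mathit{s}}$ explicit; mathematically the two are interchangeable. Your proposed strengthening (carrying an arbitrary finite set $X$ of variables through the induction) is unnecessary once weakening is available, and the paper does not use it.
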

\begin{proof}
 The proof is by main induction on the maximal length of all
 $\beta$-reduction sequences starting from $M$ and subinduction on the
 structure of $M$.  We analyse the possible cases according to the
 shape of the term $M$.
 \begin{itemize}
  \renewcommand{\labelitemi}{$\bullet$}
  \item $M\equiv x$ for some variable $x$.  
	In this case we just have to take $x:\fA\vds x:\fA$, which
	is an axiom.
  \item $M\equiv xN_1\dots N_n$.  By the subinduction hypothesis, for
    any $i\in\{1,\dots,n\}$, there exist a typing context $\GA_i$ and
    a type $\fA_i$ such that $\GA_i\vds N_i:\fA_i$.  Then consider the
    following derivation (recall that $\LA_{\cap}^{\mathit{s}}$ is
    closed under the weakening rule):
$$
 \infer[(\mathsf{L}\to)]{\cup\GA_i,x:\fA_1\to\dots\to\fA_n\to\fB\vds xN_1\dots N_n:\fB}{
   \cup\GA_i\vds N_1:\fA_1 &
   \infer[(\mathsf{L}\to)]{\cup\GA_i,y_1:\fA_2\to\dots\to\fA_n\to\fB\vds y_1N_2\dots N_n:\fB}{
    \cup\GA_i\vds N_2:\fA_2 &
    \infer*{\cup\GA_i,y_2:\fA_3\to\dots\to\fA_n\to\fB\vds y_2N_3\dots N_n:\fB}{
     \infer[(\mathsf{L}\to)]{\cup\GA_i,y_{n-1}:\fA_n\to\fB\vds y_{n-1}N_n:\fB}{
      \cup\GA_i\vds N_n:\fA_n & \cup\GA_i,y_n:\fB\vds y_n:\fB}}}}
$$
  \item $M\equiv \lam{x}{P}$.  By the subinduction hypothesis, there
	exist a typing context $\GA$ and a type $\fA$ such that
	$\GA,x:\fA_1,\dots,x:\fA_n\vds P:\fA$ where $x\notin\GA$ and $n\ge 0$.
	Then we have $\GA\vds\lam{x}{P}:\fA_1\cap\dots\cap\fA_n\to\fA$
	by the $(\mathsf{L}\,\cap)$ and $(\mathsf{R}\to)$ rules.  (We use
	a weakening rule instead of $(\mathsf{L}\,\cap)$ when $n=0$.)
  \item $M\equiv (\lam{x}{P})NN_1\dots N_n$.  By the main induction
	hypothesis, there exist a typing context $\GA_1$ and a type
	$\fA_1$ such that $\GA_1\vds P\subst{x}{N}N_1\dots N_n:\fA_1$,
	and, by the subinduction hypothesis, there exist a typing context
	$\GA_2$ and a type $\fA_2$ such that $\GA_2\vds N:\fA_2$.  Then,
	by the weakening and $(\mathsf{Beta})^{\mathit{s}}$ rules, we
	obtain $\GA_1,\GA_2\vds(\lam{x}{P})NN_1\dots N_n:\fA_1$.  \qed
 \end{itemize}
\end{proof}

It is interesting to note that in the above proof we do not use the
$(\mathsf{R}\,\cap)$ rule at all, so it is redundant for
characterising the strongly normalising $\l$-terms. The absence of the
$(\mathsf{R}\,\cap)$ rule leads to a restriction on types that is
similar to those investigated in~\cite{vanBakel92}.

The results in this section are summarised as follows.

\begin{corollary}
 For any $\l$-term $M$, the following are equivalent.
 \begin{enumerate}
  \item $M$ is typable in $\LA_{\cap}$.
  \item $M$ is typable in $\LA_{\cap}^{\mathit{s}}$.
  \item $M$ is strongly normalising.
  \item $M$ is typable in $\LA_{\cap}^{\mathit{s}}$ without using the
        $(\mathsf{R}\,\cap)$ rule.
 \end{enumerate}
\end{corollary}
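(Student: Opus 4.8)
The plan is to close the cycle of implications $(1)\Rightarrow(2)\Rightarrow(3)\Rightarrow(4)\Rightarrow(2)$ together with $(2)\Rightarrow(1)$, assembling the theorems already proved in this section; essentially no new argument is needed beyond collecting them and checking one side condition. For $(1)\Rightarrow(2)$: if $\GA\vd M:\fA$ in $\LA_{\cap}$ then Theorem~\ref{theorem:n-to-s} gives $\GA\vds M:\fA$, so $M$ is typable in $\LA_{\cap}^{\mathit{s}}$. For $(2)\Rightarrow(1)$: if $\GA\vds M:\fA$, then Theorem~\ref{theorem:s-eq-n} yields $\GA_{\cap}\vd M:\fA$, and since $\GA_{\cap}$ is a legitimate typing context of $\LA_{\cap}$ (each variable occurs at most once in it), $M$ is typable in $\LA_{\cap}$. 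Hence $(1)\Leftrightarrow(2)$.

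Next, $(2)\Rightarrow(3)$ is immediate from Theorem~\ref{theorem:sn-s}: any $M$ with $\GA\vds M:\fA$ lies in $\SN{\beta}$. For $(3)\Rightarrow(4)$: given $M\in\SN{\beta}$, Theorem~\ref{theorem:sn-typable-s} produces a context $\GA$ and a type $\fA$ with $\GA\vds M:\fA$; the point is to observe — by inspecting the proof of that theorem — that none of its four cases invokes the $(\mathsf{R}\,\cap)$ rule (as already remarked immediately after Theorem~\ref{theorem:sn-typable-s}), so the derivation obtained lies in the $(\mathsf{R}\,\cap)$-free fragment. Finally $(4)\Rightarrow(2)$ is trivial, since a derivation that happens not to use $(\mathsf{R}\,\cap)$ is in particular a derivation of $\LA_{\cap}^{\mathit{s}}$.

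The only place demanding any care is $(3)\Rightarrow(4)$: one must verify that the construction in Theorem~\ref{theorem:sn-typable-s} genuinely stays inside the $(\mathsf{R}\,\cap)$-free fragment in every case — the abstraction case, which uses $(\mathsf{L}\,\cap)$ and $(\mathsf{R}\to)$ but never $(\mathsf{R}\,\cap)$; the head-variable case, built from nested instances of $(\mathsf{L}\to)$ and axioms; and the $(\mathsf{Beta})^{\mathit{s}}$ case — and that the weakening steps appealed to (Lemma~\ref{lemma:weakening-s}) are likewise $(\mathsf{R}\,\cap)$-free, which is clear from the induction establishing that lemma. Everything else is a direct citation, so the write-up is short.
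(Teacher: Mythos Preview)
Your proof is correct and follows essentially the same route as the paper: the same cycle $(1)\Rightarrow(2)\Rightarrow(3)\Rightarrow(4)\Rightarrow(2)$ together with $(2)\Rightarrow(1)$, each step citing the corresponding theorem from the section. Your extra care in $(3)\Rightarrow(4)$ about weakening being $(\mathsf{R}\,\cap)$-free and in $(2)\Rightarrow(1)$ about $\GA_{\cap}$ being a legitimate $\LA_{\cap}$ context is sound and slightly more explicit than the paper, but not a different approach.
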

\begin{proof}
 ($1\Rightarrow 2$) This follows from Theorem~\ref{theorem:n-to-s}. \\
 ($2\Rightarrow 3$) This follows from Theorem~\ref{theorem:sn-s}. \\
 ($3\Rightarrow 4$) This follows from the proof of
 Theorem~\ref{theorem:sn-typable-s}. \\
 ($4\Rightarrow 2$) This is trivial. \\
 ($2\Rightarrow 1$) This follows from Theorem~\ref{theorem:s-eq-n}.
 \qed
\end{proof}


\section{Characterisation of weakly normalising \texorpdfstring{$\bm{\lambda}$}{lambda}-terms}
\label{section:wn}

In this section we are concerned with weak normalisation and some type
systems obtained by extending the systems $\LA_{\cap}$ and
$\LA_{\cap}^{\mathit{s}}$.  The main goal of this section is to prove
the characterisation theorem of weak normalisation in a similar way to
that of strong normalisation in the previous section.


The extended systems are listed in Figure~\ref{table:extended-systems}.
First we introduce a new rule $(\mathsf{Beta})^{\ell}$, which is
a general form of the rule considered in \cite{Valentini2001} ($\fA$
is restricted to type variables in \cite{Valentini2001}).  Then the
system $\LA_{\cap}^{\ell}$ is obtained from
$\LA_{\cap}^{\mathit{s}}$ by replacing the
$(\mathsf{Beta})^{\mathit{s}}$ rule by the
$(\mathsf{Beta})^{\ell}$ rule.  The systems $\LA_{\cap\omega}$,
$\LA_{\cap\omega}^{\mathit{s}}$ and $\LA_{\cap\omega}^{\ell}$
are obtained from $\LA_{\cap}$, $\LA_{\cap}^{\mathit{s}}$ and
$\LA_{\cap}^{\ell}$, respectively, by adding the type constant
$\omega$ and the $(\omega)$ rule.  In order to distinguish the
judgements of the systems, we use the symbols $\vdl$, $\vdo$, $\vdso$
and $\vdlo$.

\begin{figure}[b]
\[\begin{array}{|c|}
\upline\\[-12pt]
\infer[(\mathsf{Beta})^{\ell}]{\GA\vd(\lam{x}{M})NN_1\dots N_n:\fA}{
 \GA\vd M\subst{x}{N}N_1\dots N_n:\fA}
\qquad\quad
\infer[(\omega)]{\GA\vd M:\omega}{}
\\\\
\begin{array}{llll}
 &&& \text{ Notation} \\[0.5ex]
 \LA_{\cap}^{\ell} & := & \LA_{\cap}^{\mathit{s}}-(\mathsf{Beta})^{\mathit{s}}+(\mathsf{Beta})^{\ell}
  \hspace*{5ex} & \GA\vdl M:\fA \\[0.5ex]
 \LA_{\cap\omega} & := & \LA_{\cap}+(\omega) & \GA\vdo M:\fA \\[0.5ex]
 \LA_{\cap\omega}^{\mathit{s}} & := & \LA_{\cap}^{\mathit{s}}+(\omega) & \GA\vdso M:\fA \\[0.5ex]
 \LA_{\cap\omega}^{\ell} & := & \LA_{\cap}^{\ell}+(\omega) & \GA\vdlo M:\fA
\end{array}\\[-12pt]
\downline
\end{array}
\]
\caption{Systems extended with $\omega$}\label{table:extended-systems}
\end{figure}

For the system $\LA_{\cap}^{\ell}$, we have the following theorem.

\begin{theorem}\label{theorem:wn-l}
 If $\GA\vdl M:\fA$ then $M\in\WN{\beta}$.
\end{theorem}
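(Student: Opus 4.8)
The plan is to mimic the inductive argument of Theorem~\ref{theorem:sn-s}, but now targeting weak normalisation instead of strong normalisation. I would proceed by induction on the derivation of $\GA\vdl M:\fA$ in $\LA_{\cap}^{\ell}$, showing in each case that $M$ has \emph{some} finite $\beta$-reduction sequence to a normal form. The cases for $(\mathsf{Ax})$, $(\mathsf{R}\to)$, $(\mathsf{R}\,\cap)$, $(\mathsf{L}\,\cap)$ are routine: in each of them the subterms involved are weakly normalising by the induction hypothesis, and the term $M$ is built from them by abstraction or by applying a variable to them, so reducing each argument to its normal form yields a normal form for $M$. (In the $(\mathsf{R}\,\cap)$ case the two premisses speak about the same term $M$, so one induction hypothesis already suffices.) The case of $(\mathsf{L}\to)$ needs slightly more care, because the premise about $yN_1\dots N_n$ refers to a different head variable than the conclusion; but $y$ does not occur in the $N_i$, so a normalising strategy for $yN_1\dots N_n$ can be transported to one for $xNN_1\dots N_n$ once $N$ itself has been normalised (using that $N$ is weakly normalising by the first premise's induction hypothesis).

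The only genuinely interesting case is $(\mathsf{Beta})^{\ell}$, where $M\equiv(\lam{x}{M_0})NN_1\dots N_n$ and the premise gives $\GA\vdl M_0\subst{x}{N}N_1\dots N_n:\fA$. By the induction hypothesis, $M_0\subst{x}{N}N_1\dots N_n\in\WN{\beta}$, so there is a finite reduction $M_0\subst{x}{N}N_1\dots N_n\bredstar Q$ with $Q$ in $\beta$-normal form. Since $(\lam{x}{M_0})NN_1\dots N_n\bred M_0\subst{x}{N}N_1\dots N_n$, prefixing this one step to the given reduction witnesses $M\bredstar Q$, hence $M\in\WN{\beta}$. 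Note the crucial asymmetry with strong normalisation: there is no need to recover $N\in\WN{\beta}$ separately or to argue about reductions inside $N$ that are not performed by the displayed strategy — weak normalisation only requires exhibiting one terminating sequence, and contracting the head redex first and then following the normalising sequence for the contractum does exactly that. This is why the rule $(\mathsf{Beta})^{\ell}$ needs no second premise about $N$, unlike $(\mathsf{Beta})^{\mathit{s}}$.

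I would therefore expect \emph{no} serious obstacle in this theorem; the subtlety that might require a remark is simply making precise, in the $(\mathsf{L}\to)$ case, how a normalising sequence for $yN_1\dots N_n$ together with one for $N$ is assembled into a normalising sequence for $xNN_1\dots N_n$ — one first reduces $N$ to its normal form $N'$, then reduces each $N_i$ inside $yN_1\dots N_n$ according to the given strategy, observing that these reductions lift verbatim to $xN_1\dots N_n$ with the head variable $x$ in place of $y$, and finally that no new redex is created at the head since $x$ is a variable. Once this bookkeeping is stated, the induction goes through directly, giving a proof entirely parallel to, but simpler than, that of Theorem~\ref{theorem:sn-s}.
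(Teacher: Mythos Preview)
Your proposal is correct and follows exactly the approach the paper uses: the paper's proof is the single line ``By induction on the derivation of $\GA\vdl M:\fB$'', and you have faithfully spelled out the cases, including the key observation that for $(\mathsf{Beta})^{\ell}$ one simply prepends the head contraction to the normalising sequence obtained from the induction hypothesis. Your remark about why no second premise on $N$ is needed is precisely the point.
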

\begin{proof}
 By induction on the derivation of $\GA\vdl M:\fB$.  \qed
\end{proof}

For characterisation of weak normalisation in terms of typability in
the extended systems, it is necessary to clarify the relationship
among them.  First we show that the terms typable in the ordinary
natural deduction style system $\LA_{\cap\omega}$ are typable in
$\LA_{\cap\omega}^{\mathit{s}}$, in almost the same way as in the
previous section.

\begin{theorem}\label{theorem:no-to-so}
 If $\GA\vdo M:\fA$ then $\GA\vdso M:\fA$.
\end{theorem}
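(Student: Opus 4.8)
The plan is to mirror the proof of Theorem~\ref{theorem:n-to-s} for the systems with $\omega$. That is, I would prove Theorem~\ref{theorem:no-to-so} by induction on the derivation of $\GA\vdo M:\fA$ in $\LA_{\cap\omega}$, which means establishing the $\omega$-versions of all the structural lemmas used in the $\omega$-free case. Concretely, I first note that $\LA_{\cap\omega}^{\mathit{s}}$ is closed under weakening (the analogue of Lemma~\ref{lemma:weakening-s}), under the intersection-inversion rule (the analogue of Lemma~\ref{lemma:inters-inv}), under the ``apply a fresh variable'' transformation (the analogue of Lemma~\ref{lemma:app-var}), under substitution (the analogue of Lemma~\ref{lemma:subst-s}), and finally under the elimination rule $(\to\mathsf{E})$ (the analogue of Lemma~\ref{lemma:app}). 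Each of these is proved exactly as before, since adding the axiom $(\omega)$ introduces only one new base case in each induction, and that case is trivial: if the judgement in question is an instance of $(\omega)$, then the desired conclusion is itself derivable by a single application of $(\omega)$ (possibly after weakening), because $\omega$ is the type in the conclusion and $(\omega)$ has no premisses.

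With these lemmas in hand, the proof of Theorem~\ref{theorem:no-to-so} proceeds by induction on the $\LA_{\cap\omega}$-derivation. The cases $(\mathsf{Ax})$, $(\to\mathsf{I})$, $(\cap\,\mathsf{I})$ are handled directly by the corresponding rules of $\LA_{\cap\omega}^{\mathit{s}}$; the two $(\cap\,\mathsf{E})$ cases use the $\omega$-analogue of Lemma~\ref{lemma:inters-inv}; the $(\to\mathsf{E})$ case uses the $\omega$-analogue of Lemma~\ref{lemma:app}; and the new case, where the last rule is $(\omega)$, is immediate since $\GA\vdso M:\omega$ follows by a single application of $(\omega)$ in $\LA_{\cap\omega}^{\mathit{s}}$.

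Since all of this is routine adaptation, I would present only the skeleton. The one point that deserves a remark is the substitution lemma: in the statement of the $\omega$-analogue of Lemma~\ref{lemma:subst-s}, the measure for the main induction is still the number of occurrences of $\to$ and $\cap$ in $\fA_1,\dots,\fA_m$, and one should check that the recursive calls in the $(\mathsf{L}\to)$ case still decrease this measure even when the relevant type is of the form $\fC_1\to\fC_2$ with $\omega$ somewhere inside; it does, because $\omega$ by itself contributes nothing to the count but the outer $\to$ of $\fC_1\to\fC_2$ is still stripped off. The expected main obstacle is simply verifying that none of these measures or case analyses break when $\omega$ is present --- but since $(\omega)$ is an axiom with conclusion of a fixed shape, it never interferes with the inductive structure, and the adaptation goes through verbatim. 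Hence the formal proof reads: \emph{By induction on the derivation of $\GA\vdo M:\fA$ in $\LA_{\cap\omega}$, using the $\omega$-analogues of Lemmas~\ref{lemma:inters-inv} and~\ref{lemma:app}, which are proved exactly as in Section~\ref{section:sn} with one extra trivial case for the rule $(\omega)$.}
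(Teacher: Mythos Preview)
Your proposal is correct and follows essentially the same approach as the paper: the paper's proof simply states that Lemmas~\ref{lemma:weakening-s} through~\ref{lemma:app} hold for $\LA_{\cap\omega}^{\mathit{s}}$ in place of $\LA_{\cap}^{\mathit{s}}$, and then concludes by induction on the $\LA_{\cap\omega}$-derivation. Your write-up spells out more detail (the trivial $(\omega)$ base case and the observation about the induction measure in the substitution lemma), but the strategy is identical.
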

\begin{proof}
 It is easy to see that Lemmas \ref{lemma:weakening-s} through
 \ref{lemma:app} hold for $\LA_{\cap\omega}^{\mathit{s}}$ instead of
 $\LA_{\cap}^{\mathit{s}}$.  Then the theorem follows by induction on the
 derivation of $\GA\vdo M:\fA$ in $\LA_{\cap\omega}$.  \qed
\end{proof}

Next we relate the systems $\LA_{\cap\omega}^{\mathit{s}}$,
$\LA_{\cap\omega}^{\ell}$ and $\LA_{\cap}^{\ell}$.  This
completes one direction of the characterisation theorem of weak
normalisation.

\begin{lemma}\label{lemma:so-eq-lo}
 $\GA\vdso M:\fA$ if and only if $\GA\vdlo M:\fA$.
\end{lemma}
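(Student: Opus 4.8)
The plan is to prove each implication separately, in both cases by a straightforward induction on the given typing derivation. The key observation is that the systems $\LA_{\cap\omega}^{\mathit{s}}$ and $\LA_{\cap\omega}^{\ell}$ share all of their rules except for the $\beta$-rule: $\LA_{\cap\omega}^{\ell}$ is obtained from $\LA_{\cap\omega}^{\mathit{s}}$ by replacing $(\mathsf{Beta})^{\mathit{s}}$ with $(\mathsf{Beta})^{\ell}$. Hence in each induction every case other than the $\beta$-rule is immediate: one simply reapplies the same rule to the premises supplied by the induction hypothesis.

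For the direction from $\vdso$ to $\vdlo$, consider a derivation ending in $(\mathsf{Beta})^{\mathit{s}}$ with premises $\GA\vdso M\subst{x}{N}N_1\dots N_n:\fA$ and $\GA\vdso N:\fB$, concluding $\GA\vdso (\lam{x}{M})NN_1\dots N_n:\fA$. By the induction hypothesis $\GA\vdlo M\subst{x}{N}N_1\dots N_n:\fA$, and since $(\mathsf{Beta})^{\ell}$ has no side premise, applying it directly gives $\GA\vdlo (\lam{x}{M})NN_1\dots N_n:\fA$. For the direction from $\vdlo$ to $\vdso$, consider a derivation ending in $(\mathsf{Beta})^{\ell}$ with premise $\GA\vdlo M\subst{x}{N}N_1\dots N_n:\fA$, concluding $\GA\vdlo (\lam{x}{M})NN_1\dots N_n:\fA$. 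By the induction hypothesis $\GA\vdso M\subst{x}{N}N_1\dots N_n:\fA$; using the $(\omega)$ rule we also have $\GA\vdso N:\omega$, and applying $(\mathsf{Beta})^{\mathit{s}}$ with this judgement as its right premise yields $\GA\vdso (\lam{x}{M})NN_1\dots N_n:\fA$.

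There is no genuine obstacle here; the only point worth making explicit is precisely the use of $\omega$ in the second direction: the side premise of $(\mathsf{Beta})^{\mathit{s}}$ that $(\mathsf{Beta})^{\ell}$ omits can always be recovered from the $(\omega)$ rule, since every term is typable with $\omega$. It is instructive to note that the analogous equivalence fails for the $\omega$-free systems $\LA_{\cap}^{\mathit{s}}$ and $\LA_{\cap}^{\ell}$: there the side premise of $(\mathsf{Beta})^{\mathit{s}}$ forces the argument $N$ to be typable, hence (by Theorem~\ref{theorem:sn-s}) strongly normalising, whereas $(\mathsf{Beta})^{\ell}$ imposes no such constraint and by Theorem~\ref{theorem:wn-l} only guarantees weak normalisation. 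This gap is exactly what the presence of $\omega$ closes.
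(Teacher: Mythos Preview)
Your proof is correct and matches the paper's own argument: the left-to-right direction drops the right premiss of $(\mathsf{Beta})^{\mathit{s}}$, and the right-to-left direction recovers it via the $(\omega)$ rule. The paper states this as derivability of each $\beta$-rule in the other system rather than as an explicit induction, but the content is the same.
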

\begin{proof}
 The implication from left to right is immediate by forgetting the
 right premiss of $(\mathsf{Beta})^{\mathit{s}}$.  For the converse,
 observe that the $(\mathsf{Beta})^{\ell}$ rule is derivable in
 $\LA_{\cap\omega}^{\mathit{s}}$ using the rules
 $(\mathsf{Beta})^{\mathit{s}}$ and $(\omega)$.  \qed
\end{proof}




\begin{lemma}\label{lemma:lo-eq-l}
 Suppose $\fA$ and all types in $\GA$ are $\omega$-free.  Then $\GA\vdlo
 M:\fA$ if and only if $\GA\vdl M:\fA$.
\end{lemma}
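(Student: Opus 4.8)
The plan is to prove both directions by induction on derivations, with the left-to-right direction being the substantive one. For the right-to-left direction, suppose $\GA\vdl M:\fA$ with $\GA,\fA$ all $\omega$-free. Every rule of $\LA_{\cap}^{\ell}$ is a rule of $\LA_{\cap\omega}^{\ell}$, so trivially $\GA\vdlo M:\fA$; no induction is even needed here.

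For the left-to-right direction, I would argue by induction on the derivation of $\GA\vdlo M:\fA$, under the standing hypothesis that $\fA$ and all types in $\GA$ are $\omega$-free. The cases $(\mathsf{Ax})$, $(\mathsf{R}\,\cap)$, $(\mathsf{Beta})^{\ell}$ and $(\omega)$ are immediate or vacuous: for $(\mathsf{Ax})$ the premiss is empty; for $(\mathsf{Beta})^{\ell}$ the conclusion and premiss carry the same type $\fA$ and the same context $\GA$, so the induction hypothesis applies directly; and $(\omega)$ cannot be the last rule since its conclusion type $\omega$ is not $\omega$-free. For $(\mathsf{R}\,\cap)$ with conclusion type $\fA\cap\fB$, both $\fA$ and $\fB$ are $\omega$-free, so the induction hypothesis applies to both premisses. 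For $(\mathsf{R}\to)$ with conclusion $\GA\vdlo\lam{x}{M}:\fA\to\fB$, the hypothesis gives $\fA\to\fB$ $\omega$-free, hence both $\fA$ and $\fB$ are $\omega$-free, and the premiss $\GA,x:\fA\vdlo M:\fB$ has an $\omega$-free context and conclusion type, so the induction hypothesis applies. For $(\mathsf{L}\,\cap)$ the premiss $\GA,x:\fC_1,x:\fC_2\vdlo xN_1\dots N_n:\fB$ is obtained by splitting an $\omega$-free intersection type $\fC_1\cap\fC_2$ in the context, so the context stays $\omega$-free and the conclusion type $\fB$ is unchanged; the induction hypothesis then yields $\GA,x:\fC_1,x:\fC_2\vdl xN_1\dots N_n:\fB$, and one application of $(\mathsf{L}\,\cap)$ in $\LA_{\cap}^{\ell}$ finishes the case.

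The main obstacle is the $(\mathsf{L}\to)$ case: here the conclusion is $\GA,x:\fC_1\to\fC_2\vdlo xNN_1\dots N_n:\fB$ with premisses $\GA\vdlo N:\fC_1$ and $\GA,y:\fC_2\vdlo yN_1\dots N_n:\fB$. Although $\fC_1\to\fC_2$ appears in the $\omega$-free context, so $\fC_1$ and $\fC_2$ are individually $\omega$-free, the right premiss has an $\omega$-free context and type and goes through by induction hypothesis, but the left premiss $\GA\vdlo N:\fC_1$ is fine since $\fC_1$ is $\omega$-free — the real subtlety is only that one must check the \emph{side condition} on $(\mathsf{L}\to)$ is preserved, which it is since it concerns variables, not types. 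Once both premisses are converted to $\LA_{\cap}^{\ell}$, re-applying $(\mathsf{L}\to)$ concludes. So in fact every case where the last rule is a rule of $\LA_{\cap}^{\ell}$ goes through because $\omega$-freeness of the conclusion propagates down to all premiss types, and the only rule that could break this — $(\omega)$ — is excluded by the hypothesis. I would present the argument as a uniform induction, spelling out $(\mathsf{L}\to)$, $(\mathsf{L}\,\cap)$ and $(\mathsf{R}\to)$ in detail and noting the remaining cases are routine.
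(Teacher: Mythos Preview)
Your proposal is correct and takes essentially the same approach as the paper. The paper simply compresses your case-by-case induction into the single observation that ``every type occurring in the derivation of $\GA\vdlo M:\fA$ also occurs in $\GA$ or $\fA$'' (a subformula property of the sequent-calculus system), from which $\omega$-freeness of all types in the derivation---and hence the non-use of the $(\omega)$ rule---follows immediately; your detailed case analysis is exactly what one would write to verify that observation.
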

\begin{proof}
 The implication from right to left is trivial.  For the converse,
 observe that every type occurring in the derivation of $\GA\vdlo
 M:\fA$ also occurs in $\GA$ or $\fA$.  \qed
\end{proof}

\begin{corollary}
 If $\GA\vdo M:\fA$ where $\fA$ and all types in $\GA$ are
 $\omega$-free, then $M\in\WN{\beta}$.
\end{corollary}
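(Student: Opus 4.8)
The plan is to obtain the corollary purely by chaining the results already proved in this section; no new lemma is required. Starting from the hypothesis $\GA\vdo M:\fA$ in the natural deduction style system $\LA_{\cap\omega}$, the first step is to apply Theorem~\ref{theorem:no-to-so} to get a derivation $\GA\vdso M:\fA$ in the sequent calculus style system $\LA_{\cap\omega}^{\mathit{s}}$.

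Next I would move to the system $\LA_{\cap\omega}^{\ell}$ using Lemma~\ref{lemma:so-eq-lo}, which yields $\GA\vdlo M:\fA$. The purpose of this step is to trade the two-premiss rule $(\mathsf{Beta})^{\mathit{s}}$ — whose right premiss may carry a type with no relation to $\fA$ or to the types in $\GA$ — for the single-premiss rule $(\mathsf{Beta})^{\ell}$, so that the derivation mentions only ``relevant'' types.

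The step where the $\omega$-free hypothesis is needed, and which I expect to be the only delicate point, is the passage from $\LA_{\cap\omega}^{\ell}$ back to $\LA_{\cap}^{\ell}$: since $\fA$ and all types in $\GA$ are $\omega$-free, Lemma~\ref{lemma:lo-eq-l} applies and gives $\GA\vdl M:\fA$. Here one must check that the subformula-style property underlying Lemma~\ref{lemma:lo-eq-l} — every type occurring in the derivation already occurs in $\GA$ or in $\fA$ — really holds for the derivation produced in the previous step; this is immediate from the shape of the rules of $\LA_{\cap\omega}^{\ell}$ once $(\mathsf{Beta})^{\mathit{s}}$ has been eliminated, but it is the place where the argument would break if $\omega$ were allowed in $\GA$ or $\fA$.

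Finally, applying Theorem~\ref{theorem:wn-l} to $\GA\vdl M:\fA$ gives $M\in\WN{\beta}$, which completes the proof. In summary, the corollary is obtained by the composite implication $\GA\vdo M:\fA \Rightarrow \GA\vdso M:\fA \Rightarrow \GA\vdlo M:\fA \Rightarrow \GA\vdl M:\fA \Rightarrow M\in\WN{\beta}$, using Theorem~\ref{theorem:no-to-so}, Lemma~\ref{lemma:so-eq-lo}, Lemma~\ref{lemma:lo-eq-l} (with the $\omega$-free hypothesis), and Theorem~\ref{theorem:wn-l} in turn.
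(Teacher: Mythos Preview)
Your proposal is correct and follows exactly the same approach as the paper: the paper's proof is precisely the chain Theorem~\ref{theorem:no-to-so} $\Rightarrow$ Lemma~\ref{lemma:so-eq-lo} $\Rightarrow$ Lemma~\ref{lemma:lo-eq-l} $\Rightarrow$ Theorem~\ref{theorem:wn-l}. Your additional commentary on why the $\omega$-free hypothesis is needed at the third step is accurate and matches the reasoning behind Lemma~\ref{lemma:lo-eq-l}.
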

\begin{proof}
 By Theorem \ref{theorem:no-to-so}, Lemmas \ref{lemma:so-eq-lo} and
 \ref{lemma:lo-eq-l}, and Theorem \ref{theorem:wn-l}.  \qed
\end{proof}

Conversely, if a $\l$-term $M$ is weakly normalising, then there exist
a typing context $\GA$ and a type $\fA$, both $\omega$-free, such that
$\GA\vdo M:\fA$.
To prove this, we need the following lemmas on properties of the
system $\LA_{\cap\omega}$.  These are shown in similar ways to the
proofs of Lemmas \ref{lemma:weakening-n} through
\ref{lemma:generation-n}.

\begin{lemma}\label{lemma:weakening-no}
 If $\GA\vdo M:\fB$ and $z\notin\GA$ then $\GA,z:\fA\vdo M:\fB$.
\end{lemma}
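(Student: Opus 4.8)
The plan is to prove the statement by a straightforward induction on the derivation of $\GA\vdo M:\fB$ in $\LA_{\cap\omega}$, following verbatim the proof of Lemma~\ref{lemma:weakening-n} and adding one extra case for the $(\omega)$ rule. First I would dispatch the two base cases. If the last rule is $(\mathsf{Ax})$, then $\GA\vdo M:\fB$ has the form $\GA',y:\fB\vdo y:\fB$; since $z\notin\GA$, the context $\GA,z:\fA$ still has pairwise distinct variables, so $\GA,z:\fA\vdo y:\fB$ is again an instance of $(\mathsf{Ax})$. If the last rule is $(\omega)$, then $\fB\equiv\omega$ and $\GA,z:\fA\vdo M:\omega$ is itself an instance of $(\omega)$.

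For the inductive cases $(\cap\,\mathsf{I})$, $(\cap\,\mathsf{E})$ and $(\to\mathsf{E})$, the plan is simply to apply the induction hypothesis to each premiss (the side condition $z\notin\GA$ is inherited unchanged, since these rules do not alter the context) and then re-apply the same rule to obtain the desired conclusion. The only case requiring a little care is $(\to\mathsf{I})$: here $M\equiv\lam{x}{P}$, $\fB\equiv\fB_1\to\fB_2$, and the premiss is $\GA,x:\fB_1\vdo P:\fB_2$ with $x\notin\GA$. Since we identify $\alpha$-convertible $\l$-terms, we may assume $x\not\equiv z$ and $x\notin\GA,z:\fA$; then $z\notin\GA,x:\fB_1$, so the induction hypothesis yields $\GA,x:\fB_1,z:\fA\vdo P:\fB_2$, i.e.\ $(\GA,z:\fA),x:\fB_1\vdo P:\fB_2$ with $x\notin\GA,z:\fA$, and one application of $(\to\mathsf{I})$ gives $\GA,z:\fA\vdo\lam{x}{P}:\fB_1\to\fB_2$.

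I do not expect a genuine obstacle: the argument is the exact analogue of the proof of Lemma~\ref{lemma:weakening-n}, the only addition being the trivial $(\omega)$ case. The single point that must not be overlooked is the $\alpha$-renaming of the bound variable in the $(\to\mathsf{I})$ case, which is needed to maintain the freshness condition on the typing context after weakening.
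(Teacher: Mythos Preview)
Your proposal is correct and follows exactly the approach the paper indicates: an induction on the derivation of $\GA\vdo M:\fB$, mirroring the proof of Lemma~\ref{lemma:weakening-n} with the additional trivial case for the $(\omega)$ rule. The paper itself does not spell out the details, so your write-up is simply a more explicit version of the same argument.
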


\begin{lemma}\label{lemma:subst-no}
 $\LA_{\cap}$ is closed under substitution, i.e., if $\GA,x:\fA\vdo
 P:\fB$ where $x\notin\GA$ and $\GA\vdo N:\fA$ then $\GA\vdo
 P\subst{x}{N}:\fB$.
\end{lemma}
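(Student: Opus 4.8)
The plan is to prove this by induction on the derivation of $\GA,x:\fA\vdo P:\fB$ in $\LA_{\cap\omega}$, with a case analysis on the last rule applied, following exactly the pattern of the proof of Lemma~\ref{lemma:subst-n} for $\LA_{\cap}$ and treating the one additional rule $(\omega)$ separately. For the base case $(\mathsf{Ax})$, the term $P$ is a variable: if $P\equiv x$, then since $x\notin\GA$ the variable $x$ carries the single type $\fA$ in the context $\GA,x:\fA$, so $\fB\equiv\fA$, and as $P\subst{x}{N}\equiv N$ the conclusion $\GA\vdo N:\fA$ is precisely the hypothesis; if instead $P\equiv y\not\equiv x$, then $y:\fB\in\GA$, and $P\subst{x}{N}\equiv y$, so $\GA\vdo y:\fB$ follows by $(\mathsf{Ax})$. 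For the rule $(\omega)$ we have $\fB\equiv\omega$, and $\GA\vdo P\subst{x}{N}:\omega$ is again an instance of $(\omega)$, regardless of the shape of $P$.

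The inductive cases $(\to\mathsf{E})$, $(\cap\,\mathsf{I})$ and $(\cap\,\mathsf{E})$ are immediate: substitution commutes with application, the subject $P$ is unchanged by the elimination and pairing rules, and in each case one applies the induction hypothesis to the premiss(es) and re-applies the same rule. The only case requiring a little attention is $(\to\mathsf{I})$, where $P\equiv\lam{y}{P'}$, $\fB\equiv\fB_1\to\fB_2$, and the premiss is $\GA,x:\fA,y:\fB_1\vdo P':\fB_2$. By $\alpha$-conversion we may assume the bound variable $y$ is fresh, so that $y\notin\GA$ and $y\notin\FV{N}$; then Lemma~\ref{lemma:weakening-no} gives $\GA,y:\fB_1\vdo N:\fA$, and the induction hypothesis yields $\GA,y:\fB_1\vdo P'\subst{x}{N}:\fB_2$. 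Applying $(\to\mathsf{I})$ gives $\GA\vdo\lam{y}{P'\subst{x}{N}}:\fB_1\to\fB_2$, and since $y\notin\FV{N}$ we have $\lam{y}{P'\subst{x}{N}}\equiv(\lam{y}{P'})\subst{x}{N}$, which is the desired judgement.

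There is no genuine obstacle here: this is a routine structural induction, essentially identical to the corresponding argument for $\LA_{\cap}$. The only points deserving care are the split into the two subcases of the axiom rule, according to whether the variable being typed is $x$ or not, and, in the $(\to\mathsf{I})$ case, the combined use of $\alpha$-conversion and weakening (Lemma~\ref{lemma:weakening-no}) to transport the hypothesis $\GA\vdo N:\fA$ under the binder. The rule $(\omega)$ adds nothing beyond the trivial observation above, which is why the lemma holds for $\LA_{\cap\omega}$ by the same proof as for $\LA_{\cap}$.
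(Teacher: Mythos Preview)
Your proof is correct and follows exactly the approach indicated in the paper, namely a routine induction on the derivation of $\GA,x:\fA\vdo P:\fB$, identical to that for Lemma~\ref{lemma:subst-n} plus the trivial $(\omega)$ case. The paper gives no further detail than this, so your expanded case analysis is simply a fleshed-out version of the same argument.
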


\begin{definition}
 The relation $\le_{\omega}$ on types is defined by the axioms and rules
 in Definition~\ref{def:le} together with the axiom $\fA\le_{\omega}\omega$.
\end{definition}

\begin{lemma}\label{lemma:le-no}
 If $\GA\vdo M:\fA$ and $\fA\le_{\omega}\fB$ then $\GA\vdo M:\fB$.
\end{lemma}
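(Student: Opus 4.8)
The plan is to prove this exactly as Lemma~\ref{lemma:le-n} was proved, namely by induction on the derivation of $\fA\le_{\omega}\fB$ using the axioms and rules of Definition~\ref{def:le} together with the single new axiom $\fA\le_{\omega}\omega$. The clauses inherited from Definition~\ref{def:le} are handled exactly as in Lemma~\ref{lemma:le-n}: if $\fA\le_{\omega}\fB$ is the reflexivity axiom ($\fB\equiv\fA$), there is nothing to prove; if it is an instance of clause~2, then $\fA$ is of the form $\fA_1\cap\fA_2$ with $\fB\equiv\fA_1$ (or $\fB\equiv\fA_2$), and we obtain $\GA\vdo M:\fB$ from $\GA\vdo M:\fA_1\cap\fA_2$ by the corresponding $(\cap\,\mathsf{E})$ rule; if it is an instance of transitivity (clause~3), say $\fA\le_{\omega}\fC$ and $\fC\le_{\omega}\fB$ for some $\fC$, we apply the two induction hypotheses in succession; and if it is an instance of clause~4, say $\fA\le_{\omega}\fB_1$ and $\fA\le_{\omega}\fB_2$ with $\fB\equiv\fB_1\cap\fB_2$, we derive $\GA\vdo M:\fB_1$ and $\GA\vdo M:\fB_2$ by the induction hypotheses and conclude with $(\cap\,\mathsf{I})$.

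The only genuinely new case is the axiom $\fA\le_{\omega}\omega$, where $\fB\equiv\omega$. In that case the desired judgement $\GA\vdo M:\omega$ is an immediate instance of the $(\omega)$ rule, so the hypothesis $\GA\vdo M:\fA$ is not even used. Consequently there is no real obstacle: the argument is as routine as that of Lemma~\ref{lemma:le-n}, with the type constant $\omega$ contributing only this one trivial base case, which is why the proof can be recorded simply as ``by induction on the definition of $\fA\le_{\omega}\fB$''.
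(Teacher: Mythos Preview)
Your proof is correct and is exactly the approach the paper intends: it explicitly says these lemmas ``are shown in similar ways to the proofs of Lemmas~\ref{lemma:weakening-n} through \ref{lemma:generation-n}'', and Lemma~\ref{lemma:le-n} was proved by induction on the definition of $\le$. Your handling of the extra axiom $\fA\le_{\omega}\omega$ via the $(\omega)$ rule is precisely the routine additional case.
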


\begin{lemma}[Generation Lemma]\label{lemma:generation-no}
 Let $\fA$ be any type with $\omega\not\le_{\omega}\fA$. Then
 \begin{enumerate}
  \item $\GA\vdo MN:\fA$ if and only if
	there exist $\fA_1,\dots,\fA_n,\fB_1,\dots,\fB_n$ $(n\ge 1)$
	such that $\fA_1\cap\dots\cap\fA_n\le_{\omega}\fA$ and, for all
	$i\in\{1,\dots,n\}$, $\GA\vdo M:\fB_i\to\fA_i$ and $\GA\vdo N:\fB_i$.
  \item $\GA\vdo\lam{x}{M}:\fA$ if and
	only if there exist $\fB_1,\dots,\fB_n,\fC_1,\dots,\fC_n$ $(n\ge
	1)$ such that
	$(\fB_1\to\fC_1)\cap\dots\cap(\fB_n\to\fC_n)\le_{\omega}\fA$
	and, for all $i\in\{1,\dots,n\}$, $\GA,x:\fB_i\vdo M:\fC_i$.
 \end{enumerate}
\end{lemma}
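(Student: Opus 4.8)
The plan is to mimic the proof of the Generation Lemma for $\LA_{\cap}$ (Lemma~\ref{lemma:generation-n}), inserting the extra bookkeeping forced by the constant $\omega$ and the side condition $\omega\not\le_{\omega}\fA$. The implications from right to left require no new ideas and do not use the side condition: from the witnesses one builds, for each $i$, a judgement by $(\to\mathsf{E})$ (for part~1) or $(\to\mathsf{I})$ (for part~2), combines these by repeated $(\cap\,\mathsf{I})$, and concludes with Lemma~\ref{lemma:le-no}. The converses are the substance, and I would prove both by induction on the derivation of the hypothesis, with a case analysis on the last rule applied.

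First I would record three facts about $\le_{\omega}$, all immediate one-liners from Definition~\ref{def:le} and the axiom $\fA\le_{\omega}\omega$: (i) if $\omega\le_{\omega}\fB$ and $\omega\le_{\omega}\fC$ then $\omega\le_{\omega}\fB\cap\fC$ (this is rule~4), so whenever an intersection is not $\le_{\omega}$-above $\omega$, at least one of its two conjuncts is not either; (ii) if $\omega\le_{\omega}\fB$ then $\fC\le_{\omega}\fB$ for every type $\fC$, since $\fC\le_{\omega}\omega\le_{\omega}\fB$; (iii) if $\omega\not\le_{\omega}\fB$ then $\omega\not\le_{\omega}\fB\cap\fC$ and $\omega\not\le_{\omega}\fC\cap\fB$ for every $\fC$, since both are $\le_{\omega}\fB$ by rule~2 and transitivity would otherwise give $\omega\le_{\omega}\fB$.

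For the converse of part~1, in the induction on a derivation of $\GA\vdo MN:\fA$ the cases whose last rule is $(\mathsf{Ax})$ or $(\to\mathsf{I})$ are vacuous, since the conclusion would then not be an application; the case $(\omega)$ is vacuous since it forces $\fA\equiv\omega$, contradicting $\omega\not\le_{\omega}\fA$; the case $(\to\mathsf{E})$ is the base case and supplies the required data directly with $n=1$; and in the case $(\cap\,\mathsf{E})$ the premiss assigns to $MN$ a type having $\fA$ as a conjunct, which by fact~(iii) is still not $\le_{\omega}$-above $\omega$, so the induction hypothesis applies, and one composes its conclusion with the fact that this larger type is $\le_{\omega}\fA$ (rule~2) by transitivity. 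The converse of part~2 is symmetric, with $(\to\mathsf{I})$ as base case, $(\to\mathsf{E})$ vacuous, and the arrow types $\fB_i\to\fC_i$ playing the role of the $\fA_i$.

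The one case that really needs care---and where I expect most of the work to lie---is $(\cap\,\mathsf{I})$: here $\fA$ is itself an intersection, obtained by the rule from $\GA\vdo MN$ derived with each of the two conjuncts as its type. The induction hypothesis cannot be invoked on both premisses, because one conjunct may fail the side condition. By fact~(i) at least one conjunct is not $\le_{\omega}$-above $\omega$; applying the induction hypothesis there yields a nonempty list of witnesses whose intersection is $\le_{\omega}$ that conjunct, together with the required typings of $M$ and $N$. If the other conjunct is also not $\le_{\omega}$-above $\omega$, apply the induction hypothesis to it as well and take the concatenation of the two lists; by the rules of Definition~\ref{def:le} its intersection is $\le_{\omega}$ each conjunct, hence $\le_{\omega}\fA$ by rule~4. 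If instead the other conjunct is $\le_{\omega}$-above $\omega$, keep only the first list: by fact~(ii) its intersection is $\le_{\omega}$ that conjunct too, so again $\le_{\omega}\fA$ by rule~4. The $(\cap\,\mathsf{I})$ case of part~2 is handled identically. This $\le_{\omega}$-bookkeeping is exactly what distinguishes the argument from the proof of Lemma~\ref{lemma:generation-n}; everything else is routine.
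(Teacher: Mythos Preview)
Your proposal is correct and follows the same route as the paper: right-to-left by the typing rules plus Lemma~\ref{lemma:le-no}, left-to-right by induction on the derivation. The paper gives no details beyond pointing back to the proof of Lemma~\ref{lemma:generation-n}; your careful treatment of the $(\cap\,\mathsf{I})$ case---splitting on whether both conjuncts satisfy the side condition and using your facts (i)--(iii) about $\le_{\omega}$---is exactly the extra bookkeeping the presence of $\omega$ forces, and it is carried out correctly.
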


Now we can prove a crucial lemma about type-checking in the system
$\LA_{\cap\omega}$.

\pagebreak

\begin{lemma}\label{lemma:inv-subst-no}
 If $\GA\vdo M\subst{x}{N}:\fA$ where $x\notin\GA$ then there exists a
 type $\fC$ such that $\GA,x:\fC\vdo M:\fA$ and $\GA\vdo N:\fC$.
\end{lemma}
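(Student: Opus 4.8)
The plan is to prove this by induction on the structure of $M$, in close parallel with the proof of Lemma~\ref{lemma:inv-subst-n}, but exploiting the constant $\omega$ to streamline several cases and to dispense with the auxiliary hypothesis ``$\GA\vdo N:\fB$'' of the $\omega$-free version. The first observation is that whenever $\omega\le_{\omega}\fA$, or more generally whenever $x\notin\FV{M}$, one may simply take $\fC\equiv\omega$: in the former case $\GA,x:\omega\vdo M:\fA$ follows from the $(\omega)$ rule and Lemma~\ref{lemma:le-no}, while in the latter case $M\subst{x}{N}\equiv M$ and we weaken the given derivation by Lemma~\ref{lemma:weakening-no}; and $\GA\vdo N:\omega$ is always an instance of $(\omega)$. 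So from now on we may assume $\omega\not\le_{\omega}\fA$ and $x\in\FV{M}$, and case-split on the shape of $M$.

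If $M\equiv x$, then $M\subst{x}{N}\equiv N$, so $\GA\vdo N:\fA$, and $\fC\equiv\fA$ works, with $\GA,x:\fA\vdo x:\fA$ an axiom. If $M\equiv M_1M_2$, then $M\subst{x}{N}\equiv M_1\subst{x}{N}\,M_2\subst{x}{N}$, and since $\omega\not\le_{\omega}\fA$ the Generation Lemma (Lemma~\ref{lemma:generation-no}(1)) yields $\fA_1,\dots,\fA_n,\fB_1,\dots,\fB_n$ $(n\ge 1)$ with $\fA_1\cap\dots\cap\fA_n\le_{\omega}\fA$ and, for each $i$, $\GA\vdo M_1\subst{x}{N}:\fB_i\to\fA_i$ and $\GA\vdo M_2\subst{x}{N}:\fB_i$. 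Applying the induction hypothesis to the subterms $M_1$ and $M_2$ gives, for each $i$, types $\fC_i,\fC_i'$ with $\GA,x:\fC_i\vdo M_1:\fB_i\to\fA_i$, $\GA,x:\fC_i'\vdo M_2:\fB_i$, $\GA\vdo N:\fC_i$ and $\GA\vdo N:\fC_i'$. Setting $\fC\equiv\fC_1\cap\fC_1'\cap\dots\cap\fC_n\cap\fC_n'$, we obtain $\GA\vdo N:\fC$ by repeated $(\cap\,\mathsf{I})$, and, after narrowing the context entry for $x$ down to $\fC$ in each subderivation, we get $\GA,x:\fC\vdo M_1:\fB_i\to\fA_i$ and $\GA,x:\fC\vdo M_2:\fB_i$, hence $\GA,x:\fC\vdo M_1M_2:\fA_i$ by $(\to\mathsf{E})$, then $\GA,x:\fC\vdo M_1M_2:\fA_1\cap\dots\cap\fA_n$ by $(\cap\,\mathsf{I})$, and finally $\GA,x:\fC\vdo M_1M_2:\fA$ by Lemma~\ref{lemma:le-no}. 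The case $M\equiv\lam{y}{P}$ (with $y$ chosen fresh for $x$ and $N$, so that $M\subst{x}{N}\equiv\lam{y}{P\subst{x}{N}}$) is analogous, using Lemma~\ref{lemma:generation-no}(2): the induction hypothesis on $P$, applied in the context $\GA,y:\fB_i$, produces $\fC_i$ with $\GA,y:\fB_i,x:\fC_i\vdo P:\fA_i$ and $\GA,y:\fB_i\vdo N:\fC_i$; from the latter $\GA\vdo N:\fC_i$ by strengthening (since $y\notin\FV{N}$), and we conclude with $\fC\equiv\fC_1\cap\dots\cap\fC_n$ by $(\to\mathsf{I})$, $(\cap\,\mathsf{I})$ and Lemma~\ref{lemma:le-no} as before.

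The one point needing care is the use of two minor structural lemmas that are not among the stated ones: \emph{context narrowing} (from $\GA,x:\fC'\vdo M:\fB$ and $\fC\le_{\omega}\fC'$ infer $\GA,x:\fC\vdo M:\fB$) and \emph{strengthening} (from $\GA,y:\fB\vdo N:\fD$ with $y\notin\FV{N}$ infer $\GA\vdo N:\fD$). I would state and prove these first. Strengthening is a direct induction on the derivation (the axiom $y:\fB$ is never used). Narrowing is likewise a direct induction on the derivation, or can be derived from Lemma~\ref{lemma:subst-no} and Lemma~\ref{lemma:le-no} by a fresh-variable detour: rename $x$ to a fresh $z$, use the axiom-plus-Lemma~\ref{lemma:le-no} derivation $\GA,x:\fC\vdo x:\fC'$, substitute via Lemma~\ref{lemma:subst-no}, and rename back. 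Neither presents real difficulty.

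I do not expect a genuine obstacle here: the only new ingredient relative to Lemma~\ref{lemma:inv-subst-n} is the uniform handling of the $\omega$-cases in the first paragraph, which is exactly what makes the hypothesis on $N$ superfluous. The most error-prone part is the bookkeeping around the finite lists delivered by the Generation Lemma and ensuring that a \emph{single} context type $\fC$ serves all list components simultaneously, but this is entirely mechanical once narrowing is available.
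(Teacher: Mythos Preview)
Your proposal is correct and follows essentially the same approach as the paper: induction on the structure of $M$ using the Generation Lemma (Lemma~\ref{lemma:generation-no}), with the key observation that $\fC=\omega$ handles the degenerate cases $M\equiv y\not\equiv x$ (your more general $x\notin\FV{M}$) and $\omega\le_{\omega}\fA$. The paper's proof is terse and omits the bookkeeping you spell out (narrowing, strengthening, combining the list of types from the Generation Lemma into a single intersection), but these are exactly the routine details one would fill in.
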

\begin{proof}
 By induction on the structure of $M$, using
 Lemma~\ref{lemma:generation-no}.  If $M\equiv y(\not\equiv x)$ or
 $\omega\le_{\omega}\fA$, then we take $\fC=\omega$.  \qed
\end{proof}

We can now prove that in the system $\LA_{\cap\omega}$, types are
preserved under the inverse of $\beta$-reduction.

\begin{lemma}\label{lemma:sub-exp}
 If $\GA\vdo N:\fA$ and $M\bred N$ then $\GA\vdo M:\fA$.
\end{lemma}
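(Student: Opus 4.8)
The plan is to prove this by induction on the structure of $M$, mirroring the style of the Generation Lemma and Lemma~\ref{lemma:inv-subst-no}. Since $M \bred N$ means a single $\beta$-step is contracted somewhere inside $M$, I would first dispose of the trivial cases where $\omega \le_\omega \fA$ (apply the $(\omega)$ rule directly, no hypothesis needed) and where $M$ is a variable (then $M \bred N$ is impossible, so the statement holds vacuously). The remaining work splits according to whether the contracted redex is the head redex of $M$ or lies strictly inside a subterm.

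The main case is when $M$ is itself the contracted redex at the head, i.e.\ $M \equiv (\lam{x}{P})Q N_1 \dots N_n$ and $N \equiv P\subst{x}{Q} N_1 \dots N_n$ with $n \ge 0$. Here I would use Lemma~\ref{lemma:generation-no}: assuming $\omega \not\le_\omega \fA$, repeated application of the Generation Lemma to $\GA \vdo P\subst{x}{Q}N_1\dots N_n : \fA$ peels off the applicative arguments $N_1, \dots, N_n$, leaving (an intersection bounding) a type for $P\subst{x}{Q}$; then Lemma~\ref{lemma:inv-subst-no} supplies a type $\fC$ with $\GA, x:\fC \vdo P : \fC'$ (for the relevant component $\fC'$) and $\GA \vdo Q : \fC$; from $\GA, x:\fC \vdo P : \fC'$ the rule $(\to\mathsf{I})$ gives $\GA \vdo \lam{x}{P} : \fC \to \fC'$, and then the rules $(\to\mathsf{E})$, $(\cap\,\mathsf{I})$ together with Lemma~\ref{lemma:le-no} reassemble a derivation of $\GA \vdo (\lam{x}{P})Q N_1 \dots N_n : \fA$. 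The bookkeeping with the intersections produced by Generation is the fiddly part, but it is exactly the pattern already used implicitly elsewhere in the paper.

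The other cases are where the redex is contracted inside a proper subterm: $M \equiv \lam{y}{P}$ with $P \bred P'$; $M \equiv M_1 M_2$ with the step inside $M_1$ or inside $M_2$. Each of these is handled by applying the Generation Lemma (again assuming $\omega \not\le_\omega \fA$, the complementary case being trivial) to expose types for the immediate subterms, invoking the induction hypothesis on the subterm that was reduced, and then rebuilding the derivation with the same final rule, using Lemma~\ref{lemma:le-no} to absorb the intersection bounds. The one point requiring a little care is the application case where the redex is in the argument $M_2$: Generation gives $\GA \vdo M_1 : \fB_i \to \fA_i$ and $\GA \vdo M_2' : \fB_i$ (where $M_2 \bred M_2'$), the induction hypothesis upgrades the latter to $\GA \vdo M_2 : \fB_i$, and $(\to\mathsf{E})$ plus $(\cap\,\mathsf{I})$ and Lemma~\ref{lemma:le-no} finish it. I expect the head-redex case, specifically the interplay of Lemma~\ref{lemma:generation-no} with Lemma~\ref{lemma:inv-subst-no} and the need to track which intersection component carries the type of the abstraction body, to be the only genuinely delicate step; the rest is routine.
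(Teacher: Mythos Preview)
Your proposal is correct and follows essentially the same route as the paper: induction on the structure of $M$, using Lemma~\ref{lemma:generation-no} to decompose the typing of the immediate subterms and invoking Lemma~\ref{lemma:inv-subst-no} when $M$ is the contracted redex. One minor remark: with a genuine structural induction the head-redex case reduces to $M\equiv(\lam{x}{P})Q$ (your $n=0$), since for $n\ge 1$ the step lies inside the left factor and is already covered by the application case with the induction hypothesis; your explicit treatment with trailing $N_1,\dots,N_n$ is harmless but redundant.
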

\begin{proof}
 By induction on the structure of $M$, using
 Lemma~\ref{lemma:generation-no}.  If $M$ is the $\beta$-redex then we
 use Lemma~\ref{lemma:inv-subst-no}.  \qed
\end{proof}

Now we can prove the announced theorem.

\begin{theorem}\label{theorem:wn-typable-no}
 If $M\in\WN{\beta}$ then there exist a typing context $\GA$ and a
 type $\fA$ such that $\GA\vdo M:\fA$ and both $\GA$ and $\fA$ are
 $\omega$-free.
\end{theorem}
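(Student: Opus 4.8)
The plan is to exploit that $M$ reduces to a $\beta$-normal form and that subject expansion (Lemma~\ref{lemma:sub-exp}) propagates a typing of that normal form back to $M$ without changing the context or the type. Since $M\in\WN{\beta}$, fix a finite reduction sequence $M\equiv M_0\bred M_1\bred\dots\bred M_k\equiv N$ with $N$ in $\beta$-normal form. It then suffices to (i) produce a typing $\GA\vdo N:\fA$ with $\GA$ and $\fA$ both $\omega$-free, and (ii) apply Lemma~\ref{lemma:sub-exp} $k$ times along $M\bredstar N$ to obtain $\GA\vdo M:\fA$; since the context and the type are unchanged, they remain $\omega$-free, and this is exactly the claimed conclusion.

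For step (i) the shortest route is to note that a $\beta$-normal form has no redexes, so $N\in\SN{\beta}$ trivially. By Theorem~\ref{theorem:sn-typable-s} there are $\GA'$ and $\fA$ with $\GA'\vds N:\fA$ in $\LA_{\cap}^{\mathit{s}}$, and by Theorem~\ref{theorem:s-eq-n} this yields a derivation $\GA'_{\cap}\vd N:\fA$ in $\LA_{\cap}$. Since $\LA_{\cap}$ contains neither the constant $\omega$ nor the rule $(\omega)$, every type occurring in this derivation — in particular every type in $\GA'_{\cap}$ and $\fA$ — is $\omega$-free, and the same derivation is literally a derivation in $\LA_{\cap\omega}$, i.e.\ $\GA'_{\cap}\vdo N:\fA$ with both $\GA'_{\cap}$ and $\fA$ $\omega$-free, as required. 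Alternatively, one can prove (i) directly by induction on the structure of $N\equiv\lam{x_1}\dots\lam{x_m}.\,yN_1\dots N_n$: type each $N_i$ inductively by an $\omega$-free judgement, merge the finitely many resulting contexts into a single $\LA_{\cap\omega}$-context using $(\cap\,\mathsf{I})$, $(\cap\,\mathsf{E})$ and weakening (Lemma~\ref{lemma:weakening-no}), give the head $y$ the arrow type $\fA_1\to\dots\to\fA_n\to\varphi$ for a fresh type variable $\varphi$, apply $(\to\mathsf{E})$ $n$ times, and finally abstract the $x_j$ by $(\to\mathsf{I})$; all types introduced in this process are $\omega$-free.

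I do not expect a genuine obstacle here. The step one might fear to be delicate — typing an arbitrary normal form — is trivialised by reusing the results already established for strong normalisation, which automatically deliver $\omega$-free types because $\LA_{\cap}$ has no $\omega$. The only residual subtleties are bookkeeping: identifying $\alpha$-convertible terms so that the reduction sequence and the substitutions hidden in Lemma~\ref{lemma:sub-exp} behave, and, in the direct argument, making the variables of a context pairwise distinct via intersections (which preserves $\omega$-freeness, since an intersection of $\omega$-free types is $\omega$-free). The real content of the theorem is packaged in Lemma~\ref{lemma:sub-exp}, which in turn rests on the Generation Lemma~\ref{lemma:generation-no} and on Lemma~\ref{lemma:inv-subst-no}; the present statement reduces to ``normal forms are $\omega$-freely typable'' together with iterated subject expansion.
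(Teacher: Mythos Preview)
Your proposal is correct and follows essentially the same route as the paper: take a normal form of $M$, invoke Theorem~\ref{theorem:sn-typable-s} to obtain an $\omega$-free typing (since $\LA_{\cap}^{\mathit{s}}$ has no $\omega$), and then iterate subject expansion (Lemma~\ref{lemma:sub-exp}) back to $M$. The only difference is cosmetic: the paper passes directly from typability in $\LA_{\cap}^{\mathit{s}}$ to $\GA\vdo M':\fA$ without explicitly routing through Theorem~\ref{theorem:s-eq-n}, whereas you spell out that detour; your alternative direct induction on the structure of the normal form is also fine but not needed.
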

\begin{proof}
 Let $M'$ be a normal form of $M$.  By
 Theorem~\ref{theorem:sn-typable-s}, every normal form is typable in
 $\LA_{\cap}^{\mathit{s}}$, so there exist a typing context $\GA$ and
 a type $\fA$, both $\omega$-free, such that $\GA\vdo M':\fA$.  Hence,
 by Lemma~\ref{lemma:sub-exp}, we have $\GA\vdo M:\fA$.  \qed
\end{proof}

We can also prove the equivalence of the systems $\LA_{\cap\omega}$,
$\LA_{\cap\omega}^{\mathit{s}}$ and $\LA_{\cap\omega}^{\ell}$.

\begin{theorem}\label{theorem:no-eq-so-eq-lo}
 For any typing context $\GA$, any $\l$-term $M$ and any type $\fA$,
 the following are equivalent. 
 \begin{enumerate}
  \item $\GA_{\cap}\vdo M:\fA$.
  \item $\GA\vdso M:\fA$.
  \item $\GA\vdlo M:\fA$.
 \end{enumerate}
\end{theorem}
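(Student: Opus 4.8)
The plan is to establish the cycle $1\Rightarrow 2\Rightarrow 3\Rightarrow 1$, reusing the development of Section~\ref{section:sn}. For $1\Rightarrow 2$ I would argue as in the proof of Theorem~\ref{theorem:s-eq-n}: from $\GA_{\cap}\vdo M:\fA$, Theorem~\ref{theorem:no-to-so} yields $\GA_{\cap}\vdso M:\fA$, and then iterating the $\omega$-analogue of Proposition~\ref{proposition:GA_cap} (which holds for $\vdso$ by exactly the same induction on derivations) splits every intersected binding $x:\fB_1\cap\cdots\cap\fB_k$ back into separate bindings, so that $\GA_{\cap}$ and $\GA$ flatten to the same set of bindings and hence derive the same sequents; therefore $\GA\vdso M:\fA$. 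The implication $2\Rightarrow 3$ is precisely Lemma~\ref{lemma:so-eq-lo}.

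The real work is $3\Rightarrow 1$: given a derivation of $\GA\vdlo M:\fA$ in $\LA_{\cap\omega}^{\ell}$, I would show $\GA_{\cap}\vdo M:\fA$ by induction on that derivation, case-analysing the last rule. The axiom $\GA,x:\fA\vdlo x:\fA$ is matched in $\LA_{\cap\omega}$ by an axiom for the (intersected) type that $x$ carries in $(\GA,x:\fA)_{\cap}$ followed by Lemma~\ref{lemma:le-no}. The rules $(\mathsf{R}\to)$, $(\mathsf{R}\,\cap)$ and $(\omega)$ are mirrored directly by $(\to\mathsf{I})$, $(\cap\,\mathsf{I})$ and $(\omega)$ together with the induction hypothesis. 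For the left rules $(\mathsf{L}\to)$ and $(\mathsf{L}\,\cap)$ the conclusion has the applicative-spine shape $xN_1\dots N_n$; here I would rebuild the spine from an axiom for $x$ and successive $(\to\mathsf{E})$ steps, combining the premisses via the weakening and substitution lemmas for $\LA_{\cap\omega}$ (Lemmas~\ref{lemma:weakening-no} and~\ref{lemma:subst-no}), exactly as the applicative cases are handled implicitly in Theorems~\ref{theorem:n-to-s} and~\ref{theorem:s-eq-n}. Finally, for $(\mathsf{Beta})^{\ell}$ the induction hypothesis gives $\GA_{\cap}\vdo M\subst{x}{N}N_1\dots N_n:\fA$, and since $(\lam{x}{M})NN_1\dots N_n\bred M\subst{x}{N}N_1\dots N_n$, subject expansion (Lemma~\ref{lemma:sub-exp}) immediately delivers $\GA_{\cap}\vdo (\lam{x}{M})NN_1\dots N_n:\fA$; alternatively this case could be treated, as in the proof of Theorem~\ref{theorem:s-eq-n}, through the Generation Lemma~\ref{lemma:generation-no} and the inverse-substitution Lemma~\ref{lemma:inv-subst-no}.

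I expect the only delicate point to be the bookkeeping of the intersection contexts $\GA_{\cap}$ in the two left-rule cases: in passing from the premisses (typed in $\GA$, resp.\ in $\GA,x:\fC_1,x:\fC_2$) to the conclusion (typed under $(\GA,x:\fC_1\to\fC_2)_{\cap}$, resp.\ $(\GA,x:\fC_1\cap\fC_2)_{\cap}$), the binding of the head variable $x$ either gains an extra conjunct or is re-associated, so one must realign contexts using Lemma~\ref{lemma:le-no} and the $\omega$-analogue of Proposition~\ref{proposition:GA_cap} (equivalently, reading $\cap$ in contexts up to associativity and $\le_{\omega}$-interchange of context types). Everything else is a routine transcription of the $\omega$-free arguments of Section~\ref{section:sn}, with $\le$ replaced throughout by $\le_{\omega}$.
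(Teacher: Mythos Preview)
Your proposal is correct and follows essentially the same route as the paper: the cycle $1\Rightarrow 2\Rightarrow 3\Rightarrow 1$ via Theorem~\ref{theorem:no-to-so} plus the $\omega$-version of Proposition~\ref{proposition:GA_cap}, then Lemma~\ref{lemma:so-eq-lo}, then an induction on the $\LA_{\cap\omega}^{\ell}$ derivation. The only cosmetic difference is that for the $(\mathsf{Beta})^{\ell}$ case the paper cites Lemmas~\ref{lemma:generation-no} and~\ref{lemma:inv-subst-no} directly, whereas you first reach for subject expansion (Lemma~\ref{lemma:sub-exp}), which is simply those two lemmas already packaged; since you also list the paper's route as your alternative, the arguments coincide.
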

\begin{proof}
 ($1\Rightarrow 2$) This follows from Theorem~\ref{theorem:no-to-so} and
 Proposition~\ref{proposition:GA_cap} with $\vdso$ instead of $\vds$. \\
 ($2\Rightarrow 3$) This follows from Lemma~\ref{lemma:so-eq-lo}. \\
 ($3\Rightarrow 1$) This follows by induction on the length of the
 derivation of $\GA\vdlo M:\fA$.  If the last applied rule is
 $(\mathsf{Beta})^{\ell}$, we use Lemmas \ref{lemma:generation-no} and
 \ref{lemma:inv-subst-no}.  \qed
\end{proof}

The results in this section are summarised as follows.

\begin{corollary}
 For any $\l$-term $M$, the following are equivalent.
 \begin{enumerate}
  \item $\GA\vdo M:\fA$ for some typing context $\GA$ and type $\fA$, both $\omega$-free.
  \item $\GA\vdso M:\fA$ for some typing context $\GA$ and type $\fA$, both $\omega$-free.
  \item $\GA\vdlo M:\fA$ for some typing context $\GA$ and type $\fA$, both $\omega$-free.
  \item $\GA\vdl M:\fA$ for some typing context $\GA$ and type $\fA$.
  \item $M$ is weakly normalising.
 \end{enumerate}
\end{corollary}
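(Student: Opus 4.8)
The plan is to establish the five conditions as equivalent by proving the cycle of implications $1\Rightarrow 2\Rightarrow 3\Rightarrow 4\Rightarrow 5\Rightarrow 1$, each step being an immediate appeal to a result already proved in this section, taking care that the $\omega$-freeness side-conditions are transported correctly along the first three steps.

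Concretely: for $1\Rightarrow 2$, given $\GA\vdo M:\fA$ with $\GA$ and $\fA$ both $\omega$-free, Theorem~\ref{theorem:no-to-so} yields $\GA\vdso M:\fA$ with the very same (hence still $\omega$-free) context and type. For $2\Rightarrow 3$, Lemma~\ref{lemma:so-eq-lo} converts $\GA\vdso M:\fA$ into $\GA\vdlo M:\fA$, again keeping $\GA$ and $\fA$ unchanged. For $3\Rightarrow 4$, since $\GA$ and $\fA$ are $\omega$-free the hypothesis of Lemma~\ref{lemma:lo-eq-l} is satisfied, and that lemma gives $\GA\vdl M:\fA$. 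For $4\Rightarrow 5$, Theorem~\ref{theorem:wn-l} directly yields $M\in\WN{\beta}$. Finally, for $5\Rightarrow 1$, Theorem~\ref{theorem:wn-typable-no} produces a typing context $\GA$ and a type $\fA$, both $\omega$-free, with $\GA\vdo M:\fA$.

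Since every implication reduces to a single earlier statement, there is no real obstacle here; the one point worth attention is that in steps $1\Rightarrow 2$, $2\Rightarrow 3$ and $3\Rightarrow 4$ the same $\GA$ and $\fA$ are carried along, so that the $\omega$-freeness recorded in condition~1 is precisely what licenses the use of Lemma~\ref{lemma:lo-eq-l} at step $3\Rightarrow 4$. (One could equally close the loop by routing through the $\omega$-free Corollary established after Lemma~\ref{lemma:lo-eq-l}, but the cycle above is the most economical packaging of the already-available lemmas.)
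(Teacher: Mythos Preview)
Your proof is correct and follows exactly the same cycle $1\Rightarrow 2\Rightarrow 3\Rightarrow 4\Rightarrow 5\Rightarrow 1$ as the paper, invoking the same results (Theorem~\ref{theorem:no-to-so}, Lemma~\ref{lemma:so-eq-lo}, Lemma~\ref{lemma:lo-eq-l}, Theorem~\ref{theorem:wn-l}, Theorem~\ref{theorem:wn-typable-no}) at each step. Your added remark about transporting $\omega$-freeness along the first three implications makes explicit what the paper leaves tacit, but the argument is otherwise identical.
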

\begin{proof}
 ($1\Rightarrow 2$) This follows from Theorem~\ref{theorem:no-to-so}. \\
 ($2\Rightarrow 3$) This follows from Lemma~\ref{lemma:so-eq-lo}. \\
 ($3\Rightarrow 4$) This follows from Lemma~\ref{lemma:lo-eq-l}. \\
 ($4\Rightarrow 5$) This follows from Theorem~\ref{theorem:wn-l}. \\
 ($5\Rightarrow 1$) This follows from Theorem~\ref{theorem:wn-typable-no}.
 \qed
\end{proof}


\section{Application to other properties}\label{section:app}

The sequent calculus style systems we introduced in the previous
sections are very useful for proving properties of intersection type
systems.  In this section we illustrate that by giving a simple proof
of the (logical) approximation theorem, a property that is usually
proved using reducibility predicates para\-metrised by typing contexts
(see, e.g.\ \cite{MR2003d:03016,Barendregt2013}).  Proofs of some
other properties 
through the sequent calculus style systems are found
in \cite{Kikuchi2009}, which also makes a comparison between general
conditions for applying the reducibility method and our approach.

For the statement of the approximation theorem, we introduce some
preliminary definitions.  The set of $\lambda\bot$-terms
\cite{Barendregt1984} is obtained by adding the constant $\bot$ to the
formation rules of $\lambda$-terms.
The type systems in the previous section are extended to those for
$\lambda\bot$-terms, where any $\lambda\bot$-term containing $\bot$ 
is typable by the $(\omega)$ rule.


\begin{definition}
 The \emph{approximation mapping} $\alpha$ from $\lambda$-terms to
 $\lambda\bot$-terms is defined inductively by
 \begin{align*}
  &\alpha(\lam{x_1\dots x_n}{xN_1\dots N_m}):=
   \lam{x_1\dots x_n}{x\alpha(N_1)\dots\alpha(N_m)} \\
  &\alpha(\lam{x_1\dots x_n}{(\lam{x}{M})NN_1\dots N_m}):=
   \lam{x_1\dots x_n}{\bot}
 \end{align*}
 where $n,m\ge 0$.
\end{definition}


\begin{lemma}\label{lemma:approx}\strut
 \begin{enumerate}
  \item\label{sub-red-app} If $\GA\vdlo\alpha(M):\fA$ and $M\bredstar N$
    then $\GA\vdlo\alpha(N):\fA$.
  \item\label{inters-app} Let $M\bredstar N$, $M\bredstar N'$,
    $\GA\vdlo\alpha(N):\fA$ and $\GA\vdlo\alpha(N'):\fB$.  Then there
    exists $N''$ such that $M\bredstar N''$ and
    $\GA\vdlo\alpha(N''):\fA\cap\fB$.
 \end{enumerate}
\end{lemma}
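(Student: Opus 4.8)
The plan is to prove both parts by induction, with part~1 as a lemma feeding into part~2. For part~\ref{sub-red-app}, since $M \bredstar N$ is a finite chain of one-step reductions, it suffices by transitivity to treat the case $M \bred N$, i.e.\ a single $\beta$-step. First I would analyse how $\alpha(M)$ and $\alpha(N)$ are related when $M \bred N$. Writing $M$ in the head form $\lam{x_1\dots x_n}{h M_1 \dots M_k}$ where $h$ is either a variable or a redex $(\lam{y}{P})Q$, the contracted redex lies either in the head part or inside one of the $M_i$. If $\alpha(M) = \lam{\vec x}{\bot}$ (the head is a redex), then the step either contracts that head redex — and then one checks $\alpha(N)$ is either still $\lam{\vec x}{\bot}$ or a term with $\bot$ occurring, both typable from $\GA\vdlo\lam{\vec x}{\bot}:\fA$ via the $(\omega)$ rule in the extended system — or the step is internal to some argument, leaving $\alpha(N) = \lam{\vec x}{\bot}$ unchanged. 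If instead the head $h$ is a variable $x$, then $\alpha(M) = \lam{\vec x}{x\,\alpha(M_1)\dots\alpha(M_k)}$ and the reduction happens inside some $M_i$, so $M_i \bred M_i'$ and $\alpha(M) \bredstar$-or-$=\alpha(N)$ componentwise; here I would use a subsidiary claim that for these "approximate normal form" shapes, $\GA\vdlo\alpha(M):\fA$ forces, via the Generation Lemma for $\vdlo$ (which holds by Theorem~\ref{theorem:no-eq-so-eq-lo} transporting Lemma~\ref{lemma:generation-no}), subderivations typing the $\alpha(M_i)$, and conversely types of $\alpha(M_i')$ obtained by the induction hypothesis can be reassembled. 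The genuinely delicate point is the head-redex case: I must confirm that $\alpha$ "collapses" exactly at head redexes so that a head-redex contraction of $M$ yields an $N$ whose approximant is again typable from $\omega$, and that $\alpha$ is monotone enough under reduction of arguments — essentially the content of a standard lemma relating $\alpha$, $\bredstar$, and $\bot$.

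For part~\ref{inters-app}, the idea is to use confluence of $\beta$-reduction: given $M \bredstar N$ and $M \bredstar N'$, let $N''$ be a common reduct, so $N \bredstar N''$ and $N' \bredstar N''$. Then applying part~\ref{sub-red-app} to $\GA\vdlo\alpha(N):\fA$ with $N \bredstar N''$ gives $\GA\vdlo\alpha(N''):\fA$, and applying it to $\GA\vdlo\alpha(N'):\fB$ with $N' \bredstar N''$ gives $\GA\vdlo\alpha(N''):\fB$; the rule $(\mathsf{R}\,\cap)$ then yields $\GA\vdlo\alpha(N''):\fA\cap\fB$, and $M \bredstar N''$ as required. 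So part~2 is essentially immediate once part~1 and Church--Rosser are in hand.

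I expect the main obstacle to be part~\ref{sub-red-app}, specifically the bookkeeping in the case where the $\beta$-step occurs strictly inside an argument $M_i$ of a variable-headed term: one needs that $\alpha$ commutes appropriately with such reductions (so that $\alpha(M) \bredstar \alpha(N)$ or they become equal after the relevant arguments reduce), and then a structural induction on $M$ combined with the Generation Lemma to push types through. A secondary subtlety is making sure the Generation Lemma and $(\mathsf{R}\,\cap)$ are legitimately available for the system $\LA_{\cap\omega}^{\ell}$ acting on $\lambda\bot$-terms; this follows because $\LA_{\cap\omega}^{\ell}$ and $\LA_{\cap\omega}^{\mathit{s}}$ derive the same judgements (Lemma~\ref{lemma:so-eq-lo}), $(\mathsf{R}\,\cap)$ is a primitive rule there, and the Generation Lemma for $\vdo$ (Lemma~\ref{lemma:generation-no}) transfers via Theorem~\ref{theorem:no-eq-so-eq-lo}. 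The $\bot$-typability convention — every $\lambda\bot$-term containing $\bot$ is typable by $(\omega)$ — is exactly what makes the head-redex-collapse case go through.
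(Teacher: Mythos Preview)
Your treatment of part~(\ref{inters-app}) is exactly the paper's: confluence gives a common reduct $N''$, part~(\ref{sub-red-app}) applied twice yields $\GA\vdlo\alpha(N''):\fA$ and $\GA\vdlo\alpha(N''):\fB$, and $(\mathsf{R}\,\cap)$ finishes.

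For part~(\ref{sub-red-app}) you take a different route. The paper argues by induction on the \emph{derivation} of $\GA\vdlo\alpha(M):\fA$, whereas you set up a structural case analysis on the head form of $M$ and plan to invoke a Generation Lemma (transported from $\LA_{\cap\omega}$ via Theorem~\ref{theorem:no-eq-so-eq-lo}) to extract and reassemble subderivations. Your approach can be made to work, but it is noticeably more roundabout: you need the equivalence theorems to hold for $\lambda\bot$-terms, and you need the Generation Lemma as an external tool. The paper's induction on derivations avoids all of this, and the reason it goes through so smoothly is a point you do not mention: $\alpha(M)$ is always a $\beta$-normal $\lambda\bot$-term, so the rule $(\mathsf{Beta})^{\ell}$ is \emph{never} the last rule of a derivation of $\GA\vdlo\alpha(M):\fA$. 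The remaining rules $(\omega)$, $(\mathsf{Ax})$, $(\mathsf{R}\!\to)$, $(\mathsf{R}\,\cap)$, $(\mathsf{L}\!\to)$, $(\mathsf{L}\,\cap)$ each determine enough of the shape of $\alpha(M)$, hence of $M$, that one can read off the corresponding shape of $N$ and rebuild a derivation for $\alpha(N)$ directly. In particular, in the $(\mathsf{L}\!\to)$ case the right premiss has the form $\GA,y:\fA_2\vdlo y\,\alpha(P_1)\dots\alpha(P_k):\fB$, which is itself $\alpha(yP_1\dots P_k)$, so the induction hypothesis applies with no Generation Lemma needed.

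One imprecision in your head-redex case: after contracting the head redex of $M=\lam{\vec x}{(\lam{y}{P})Q\vec Q}$, the reduct $N=\lam{\vec x}{P\subst{y}{Q}\vec Q}$ need not have $\bot$ anywhere in $\alpha(N)$ (take $M=(\lam{y}{y})z$). What actually carries the argument is not that $\alpha(N)$ contains $\bot$, but that any derivation of $\GA\vdlo\lam{\vec x}{\bot}:\fA$ consists only of $(\mathsf{R}\!\to)$, $(\mathsf{R}\,\cap)$ and $(\omega)$ steps, and each $(\omega)$ leaf can be replayed verbatim for the corresponding subterm of $\alpha(N)$ since $(\omega)$ types \emph{any} term at $\omega$. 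This is again immediate in the paper's derivation induction (it is simply the $(\omega)$ case), but in your decomposition it needs to be stated carefully.
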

\begin{proof}\strut
 The first part is proved by induction on the derivation of
 $\GA\vdlo\alpha(M):\fA$.  For the second part, we use confluence of
 $\beta$-reduction.  \qed
\end{proof}

Now the logical approximation theorem can be formulated as follows.


\begin{theorem}
 $\GA\vdo M:\fA$ if and only if there exists $M'$ such that $M\bredstar
 M'$ and $\GA\vdo\alpha(M'):\fA$.
\end{theorem}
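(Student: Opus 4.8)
The plan is to prove the two directions separately, using the characterisation machinery developed for the $\omega$-systems together with Lemma~\ref{lemma:approx}. For the ``if'' direction, suppose $M\bredstar M'$ and $\GA\vdo\alpha(M'):\fA$. By Theorem~\ref{theorem:no-eq-so-eq-lo} we may pass to $\GA\vdlo\alpha(M'):\fA$ (noting that the systems are for $\lambda\bot$-terms now, but the equivalences extend routinely since $\bot$ is handled uniformly by $(\omega)$). Since $\alpha(M')$ is obtained from $M'$ by replacing certain subterms by $\bot$, and $\bot$ is only typable by $(\omega)$, a derivation of $\GA\vdlo\alpha(M'):\fA$ can be transformed into one of $\GA\vdlo M':\fA$ by re-expanding each $\bot$: wherever the derivation uses $(\omega)$ to type an occurrence of $\bot$, we instead use $(\omega)$ to type the corresponding subterm of $M'$. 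Hence $\GA\vdlo M':\fA$, so $\GA\vdo M':\fA$ by Theorem~\ref{theorem:no-eq-so-eq-lo} again, and finally $\GA\vdo M:\fA$ by Lemma~\ref{lemma:sub-exp} applied along the reduction $M\bredstar M'$.

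For the ``only if'' direction, suppose $\GA\vdo M:\fA$. By Theorem~\ref{theorem:no-eq-so-eq-lo} we have $\GA\vdlo M:\fA$, and I will argue by induction on the derivation in $\LA_{\cap\omega}^{\ell}$ that there exists $M'$ with $M\bredstar M'$ and $\GA\vdlo\alpha(M'):\fA$; translating back via Theorem~\ref{theorem:no-eq-so-eq-lo} then finishes the proof. The axiom case is immediate ($M'\equiv M\equiv x$, $\alpha(x)=x$), and the $(\omega)$ case is trivial (take $M'\equiv M$; $\alpha(M'):\omega$ by $(\omega)$). For $(\mathsf{R}\to)$ we apply the induction hypothesis to the premiss and observe $\alpha(\lam{x}{M})$ is obtained by prefixing $\lam{x}{}$. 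For $(\mathsf{L}\to)$ and $(\mathsf{L}\,\cap)$, the subject is a term of the form $xN_1\dots N_n$ (an application with a variable head), so $\alpha$ distributes over the arguments; we apply the induction hypothesis to each premiss to reduce the relevant $N_i$, then use Lemma~\ref{lemma:approx}(\ref{sub-red-app}) to push through the residual reductions and reassemble. The case of $(\mathsf{Beta})^{\ell}$ is the crux: here $M\equiv(\lam{x}{P})NN_1\dots N_n$ and the premiss gives a derivation of $\GA\vdlo P\subst{x}{N}N_1\dots N_n:\fA$; by induction there is an $M'$ with $P\subst{x}{N}N_1\dots N_n\bredstar M'$ and $\GA\vdlo\alpha(M'):\fA$, and since $M\bred P\subst{x}{N}N_1\dots N_n$ we get $M\bredstar M'$ as required.

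The main obstacle is the treatment of $(\mathsf{R}\,\cap)$ together with keeping everything coherent: from the two premisses the induction hypothesis yields $M\bredstar N$ with $\GA\vdlo\alpha(N):\fA$ and $M\bredstar N'$ with $\GA\vdlo\alpha(N'):\fB$, but these two reducts need not coincide, so we cannot directly conclude $\GA\vdlo\alpha(M''):\fA\cap\fB$ for a single $M''$. This is exactly what Lemma~\ref{lemma:approx}(\ref{inters-app}) is designed to resolve: using confluence we find a common reduct $N''$ of $N$ and $N'$, and the lemma gives $\GA\vdlo\alpha(N''):\fA\cap\fB$ with $M\bredstar N''$. So the real work has already been isolated into Lemma~\ref{lemma:approx}, and the theorem itself is a clean induction invoking that lemma in the $(\mathsf{R}\,\cap)$ and $(\mathsf{L}\to)$/$(\mathsf{L}\,\cap)$ cases, plus $(\mathsf{Beta})^{\ell}$ handled by prepending one reduction step. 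I would also remark that one could instead prove the ``only if'' direction by a more semantic argument, but the inductive route above is the one that fits the ``uniform, reducibility-free'' theme of the paper.
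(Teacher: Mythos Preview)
Your proposal is correct and follows essentially the same approach as the paper: the forward direction is by induction on the $\LA_{\cap\omega}^{\ell}$ derivation, with the $(\mathsf{Beta})^{\ell}$ case handled by prepending a reduction step and the $(\mathsf{R}\,\cap)$ case by Lemma~\ref{lemma:approx}(\ref{inters-app}), while the backward direction re-expands each $\bot$ to the original subterm (typed by $(\omega)$) and then applies subject expansion (Lemma~\ref{lemma:sub-exp}). Two minor remarks: the paper carries out the backward direction directly in $\LA_{\cap\omega}$ rather than detouring through $\LA_{\cap\omega}^{\ell}$, and Lemma~\ref{lemma:approx}(\ref{sub-red-app}) is not actually needed in the $(\mathsf{L}\to)$ and $(\mathsf{L}\,\cap)$ cases, since the variable at the head is preserved under reduction so the induction hypotheses already yield reducts of the required shape.
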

\begin{proof}
 ($\Rightarrow$) By Theorem~\ref{theorem:no-eq-so-eq-lo}, it suffices
 to show that if $\GA\vdlo M:\fA$ then there exists $M'$ such that
 $M\bredstar M'$ and $\GA\vdlo\alpha(M'):\fA$.  The proof is by
 induction on the derivation of $\GA\vdlo M:\fA$.
 Here we consider some cases.
 \begin{itemize}
  \renewcommand{\labelitemi}{$\bullet$}
  \item \raisebox{-1.5ex}[3ex][3ex]{$
	\infer[(\mathsf{Beta})^{\ell}]{\GA\vdlo(\lam{x}{M})NN_1\dots
	N_n:\fA}{
	 \GA\vdlo M\subst{x}{N}N_1\dots N_n:\fA}
	$} \\
	By the induction hypothesis, there exists $M'$ such that
	$M\subst{x}{N}N_1\dots N_n\bredstar M'$ and
	$\GA\vdlo\alpha(M'):\fA$.  This $M'$ also satisfies
	$(\lam{x}{M})NN_1\dots N_n\bredstar M'$.
  \item \raisebox{-1.5ex}[2.5ex][3ex]{$
	\infer[(\mathsf{L}\to)]{\GA,
	x:\fA_1\to\fA_2\vdlo xNN_1\dots N_n:\fB}{
	 \GA\vdlo N:\fA_1 &
	 \GA,y:\fA_2\vdlo yN_1\dots N_n:\fB}
	$} \\
	where $y\notin\FV{N_1}\cup\dots\cup\FV{N_n}$ and $y\notin\GA$.
	By the induction hypothesis, there exist $N',N_1',\dots,N_n'$
	such that $N\bredstar N'$, $N_i\bredstar N_i'$,
	$\GA\vdlo\alpha(N'):\fA_1$ and $\GA,y:\fA_2\vdlo
	y\alpha(N_1')\dots\alpha(N_n'):\fB$.  Hence, by an
	instance of the $(\mathsf{L}\to)$ rule, we obtain
	$\GA,x:\fA_1\to\fA_2\vdlo x\alpha(N')\alpha(N_1')\dots
	\alpha(N_n'):\fB$. So we take $xN'N_1'\dots N_n'$ as $M'$.
  \item \raisebox{-1.5ex}[3ex][3ex]{$
	\infer[(\mathsf{R}\to)]{\GA\vdlo\lam{x}{N}:\fA\to\fB}{
	 \GA,x:\fA\vdlo N:\fB}
	$} \\
	where $x\notin\GA$.  By the induction hypothesis, there exists
	$N'$ such that $N\bredstar N'$ and
	$\GA,x:\fA\vdlo\alpha(N'):\fB$.  By an instance of the
	$(\mathsf{R}\to)$ rule, we obtain
	$\GA\vdlo\lam{x}{\alpha(N')}:\fA\to\fB$.
	Since $\alpha(\lam{x}{N'})\equiv\lam{x}{\alpha(N')}$,
	we take $\lam{x}{N'}$ as $M'$.
  \item \raisebox{-1.5ex}[3ex][3ex]{$
	\infer[(\mathsf{R}\,\cap)]{\GA\vdlo M:\fA\cap\fB}{
	 \GA\vdlo M:\fA & \GA\vdlo M:\fB}
	$} \\
	By the induction hypothesis, there exist $M_1,M_2$
	such that $M\bredstar M_1$, $M\bredstar M_2$,
	$\GA\vdlo\alpha(M_1):\fA$ and $\GA\vdlo\alpha(M_2):\fB$.
	Then by Lemma~\ref{lemma:approx}(\ref{inters-app}), there exists
	$M'$ such that $M\bredstar M'$ and $\GA\vdlo\alpha(M'):\fA\cap\fB$.
 \end{itemize}
 ($\Leftarrow$) We can show by induction on the derivation that if
 $\GA\vdo\alpha(M'):\fA$ then $\GA\vdo M':\fA$.
 Hence, by Lemma~\ref{lemma:sub-exp}, we have $\GA\vdo M:\fA$.  \qed
\end{proof}

Thus our method has been successfully applied to proving the
approximation theorem for the mapping $\alpha$ and the system
$\LA_{\cap\omega}$.  It is work in progress to give similar proofs
of the approximation theorems for the $\eta$-approximation mapping
$\alpha_{\eta}$, which maps $\lam{x}{\bot}$ directly to $\bot$, and
type systems with various preorders as discussed in
\cite{MR2001e:03033,MR2003d:03016,Barendregt2013}.
\section{Conclusion}\label{section:concl}

We have presented uniform proofs of the characterisation theorems of
normalisation properties and the approximation theorem.  The proofs
have been given via intersection type systems in sequent calculus
style. 
As investigated in \cite{Kikuchi2009}, our method can be considered to
have embedded certain conditions for applying reducibility 
directly into the typing rules of the sequent calculus style systems.
(See \cite{KamareddineRW12} for a recent survey of general conditions
for applying the reducibility method.)

As mentioned in the introduction, there are some proofs
\cite{vRS1995,Matthes2000,David2001,Abel2007} of strong normalisation
for terms typable with intersection types without using reducibility,
but they have not considered any other properties than normalisation.
Other syntactic proofs of strong normalisation for terms typable with
intersection types are found in \cite{KfouryWells1995,Boudol2003},
where the problem is reduced to that of weak normalisation with
respect to another calculus or to another notion of reduction.  The
proofs of \cite{vRS1995,Valentini2001} and ours are different from
those of \cite{KfouryWells1995,Boudol2003} in that strong
normalisation is proved directly rather than inferring it from weak
normalisation.  Yet another syntactic proof
\cite{BucciarelliPipernoSalvo2003} uses a translation from terms
typable with intersection types into simply typed $\l$-terms.


There are many directions for future work.
In addition to
the one indicated at the last paragraph of Section~\ref{section:app},
it would be worth investigating
the type inference and the inhabitation problems
for intersection types
by means of our sequent calculus style systems.




\vspace{3ex}
\noindent
\textbf{Acknowledgements\ }
I would like to thank Katsumasa Ishii for drawing my attention to
Valentini's paper and pointing out that the system includes the
$\eta$-rule.  I also thank the anonymous reviewers of ITRS 2014
workshop for valuable comments.  The figures of the derivations have
been produced with Makoto Tatsuta's \texttt{proof.sty} macros.

\nocite{*}
\bibliographystyle{eptcs}
\bibliography{main}

\begin{thebibliography}{10}
\providecommand{\bibitemdeclare}[2]{}
\providecommand{\surnamestart}{}
\providecommand{\surnameend}{}
\providecommand{\urlprefix}{Available at }
\providecommand{\url}[1]{\texttt{#1}}
\providecommand{\href}[2]{\texttt{#2}}
\providecommand{\urlalt}[2]{\href{#1}{#2}}
\providecommand{\doi}[1]{doi:\urlalt{http://dx.doi.org/#1}{#1}}
\providecommand{\bibinfo}[2]{#2}

\bibitemdeclare{inproceedings}{Abel2007}
\bibitem{Abel2007}
\bibinfo{author}{Andreas \surnamestart Abel\surnameend} (\bibinfo{year}{2007}):
  \emph{\bibinfo{title}{Syntactical strong normalization for intersection types
  with term rewriting rules}}.
\newblock In: {\sl \bibinfo{booktitle}{Proceedings of HOR'07}}, pp.
  \bibinfo{pages}{5--12}.

\bibitemdeclare{article}{vanBakel92}
\bibitem{vanBakel92}
\bibinfo{author}{Steffen \surnamestart van Bakel\surnameend}
  (\bibinfo{year}{1992}): \emph{\bibinfo{title}{Complete restrictions of the
  intersection type discipline}}.
\newblock {\sl \bibinfo{journal}{Theoretical Computer Science}}
  \bibinfo{volume}{102}, pp. \bibinfo{pages}{135--163},
  \doi{10.1016/0304-3975(92)90297-S}.

\bibitemdeclare{article}{vanbakel04strongly}
\bibitem{vanbakel04strongly}
\bibinfo{author}{Steffen \surnamestart van Bakel\surnameend}
  (\bibinfo{year}{2004}): \emph{\bibinfo{title}{Cut-elimination in the strict
  intersection type assignment system is strongly normalizing}}.
\newblock {\sl \bibinfo{journal}{Notre Dame Journal of Formal Logic}}
  \bibinfo{volume}{45}, pp. \bibinfo{pages}{35--63},
  \doi{10.1305/ndjfl/1094155278}.

\bibitemdeclare{book}{Barendregt2013}
\bibitem{Barendregt2013}
\bibinfo{author}{Henk \surnamestart Barendregt\surnameend},
  \bibinfo{author}{Wil \surnamestart Dekkers\surnameend} \&
  \bibinfo{author}{Richard \surnamestart Statman\surnameend}
  (\bibinfo{year}{2013}): \emph{\bibinfo{title}{Lambda Calculus with Types}}.
\newblock \bibinfo{publisher}{Cambridge University Press},
  \doi{10.1017/CBO9781139032636}.

\bibitemdeclare{book}{Barendregt1984}
\bibitem{Barendregt1984}
\bibinfo{author}{Henk~P. \surnamestart Barendregt\surnameend}
  (\bibinfo{year}{1984}): \emph{\bibinfo{title}{The Lambda Calculus: Its Syntax
  and Semantics}}, \bibinfo{edition}{revised} edition.
\newblock \bibinfo{publisher}{North-Holland}, \bibinfo{address}{Amsterdam}.

\bibitemdeclare{inproceedings}{Boudol2003}
\bibitem{Boudol2003}
\bibinfo{author}{Gerard \surnamestart Boudol\surnameend}
  (\bibinfo{year}{2003}): \emph{\bibinfo{title}{On strong normalization in the
  intersection type discipline}}.
\newblock In: {\sl \bibinfo{booktitle}{Proceedings of TLCA'03}},
  \bibinfo{series}{Lecture Notes in Computer Science 2701},
  \bibinfo{publisher}{Springer-Verlag}, pp. \bibinfo{pages}{60--74},
  \doi{10.1007/3-540-44904-3\_5}.

\bibitemdeclare{article}{BucciarelliPipernoSalvo2003}
\bibitem{BucciarelliPipernoSalvo2003}
\bibinfo{author}{Antonio \surnamestart Bucciarelli\surnameend},
  \bibinfo{author}{Adolfo \surnamestart Piperno\surnameend} \&
  \bibinfo{author}{Ivano \surnamestart Salvo\surnameend}
  (\bibinfo{year}{2003}): \emph{\bibinfo{title}{Intersection types and
  $\lambda$-definability}}.
\newblock {\sl \bibinfo{journal}{Mathematical Structures in Computer Science}}
  \bibinfo{volume}{13}, pp. \bibinfo{pages}{15--53},
  \doi{10.1017/S0960129502003833}.

\bibitemdeclare{article}{CDV1981}
\bibitem{CDV1981}
\bibinfo{author}{Mario \surnamestart Coppo\surnameend},
  \bibinfo{author}{Mariangiola \surnamestart Dezani{-}Ciancaglini\surnameend}
  \& \bibinfo{author}{Betti \surnamestart Venneri\surnameend}
  (\bibinfo{year}{1981}): \emph{\bibinfo{title}{Functional characters of
  solvable terms}}.
\newblock {\sl \bibinfo{journal}{Zeitschrift f\"ur Mathematische Logik und
  Grundlagen der Mathematik}} \bibinfo{volume}{27}, pp.
  \bibinfo{pages}{45--58}, \doi{10.1002/malq.19810270205}.

\bibitemdeclare{article}{David2001}
\bibitem{David2001}
\bibinfo{author}{Ren{\'e} \surnamestart David\surnameend}
  (\bibinfo{year}{2001}): \emph{\bibinfo{title}{Normalization without
  reducibility}}.
\newblock {\sl \bibinfo{journal}{Annals of Pure and Applied Logic}}
  \bibinfo{volume}{107}, pp. \bibinfo{pages}{121--130},
  \doi{10.1016/S0168-0072(00)00030-0}.

\bibitemdeclare{incollection}{MR2001e:03033}
\bibitem{MR2001e:03033}
\bibinfo{author}{Mariangiola \surnamestart Dezani-Ciancaglini\surnameend},
  \bibinfo{author}{Elio \surnamestart Giovannetti\surnameend} \&
  \bibinfo{author}{Ugo \surnamestart de'Liguoro\surnameend}
  (\bibinfo{year}{1998}): \emph{\bibinfo{title}{Intersection types,
  {$\lambda$}-models, and {B}\"ohm trees}}.
\newblock In: {\sl \bibinfo{booktitle}{Theories of Types and Proofs}}, {\sl
  \bibinfo{series}{MSJ Memoirs}}~\bibinfo{volume}{2},
  \bibinfo{publisher}{Mathematical Society of Japan}, \bibinfo{address}{Tokyo},
  pp. \bibinfo{pages}{45--97}.

\bibitemdeclare{article}{MR2003d:03016}
\bibitem{MR2003d:03016}
\bibinfo{author}{Mariangiola \surnamestart Dezani-Ciancaglini\surnameend},
  \bibinfo{author}{Furio \surnamestart Honsell\surnameend} \&
  \bibinfo{author}{Yoko \surnamestart Motohama\surnameend}
  (\bibinfo{year}{2001}): \emph{\bibinfo{title}{Approximation theorems for
  intersection type systems}}.
\newblock {\sl \bibinfo{journal}{Journal of Logic and Computation}}
  \bibinfo{volume}{11}, pp. \bibinfo{pages}{395--417},
  \doi{10.1093/logcom/11.3.395}.

\bibitemdeclare{article}{JoachimskiMatthes2003}
\bibitem{JoachimskiMatthes2003}
\bibinfo{author}{Felix \surnamestart Joachimski\surnameend} \&
  \bibinfo{author}{Ralph \surnamestart Matthes\surnameend}
  (\bibinfo{year}{2003}): \emph{\bibinfo{title}{Short proofs of normalization
  for the simply-typed $\lambda$-calculus, permutative conversions and
  {G\"{o}del's} {T}}}.
\newblock {\sl \bibinfo{journal}{Archive for Mathematical Logic}}
  \bibinfo{volume}{42}, pp. \bibinfo{pages}{59--87},
  \doi{10.1007/s00153-002-0156-9}.

\bibitemdeclare{article}{KamareddineRW12}
\bibitem{KamareddineRW12}
\bibinfo{author}{Fairouz \surnamestart Kamareddine\surnameend},
  \bibinfo{author}{Vincent \surnamestart Rahli\surnameend} \&
  \bibinfo{author}{Joe~B. \surnamestart Wells\surnameend}
  (\bibinfo{year}{2012}): \emph{\bibinfo{title}{Reducibility proofs in the
  $\lambda$-calculus}}.
\newblock {\sl \bibinfo{journal}{Fundamenta Informaticae}}
  \bibinfo{volume}{121}, pp. \bibinfo{pages}{121--152},
  \doi{10.3233/FI-2012-773}.

\bibitemdeclare{inproceedings}{KfouryWells1995}
\bibitem{KfouryWells1995}
\bibinfo{author}{Assaf~J. \surnamestart Kfoury\surnameend} \&
  \bibinfo{author}{Joe~B. \surnamestart Wells\surnameend}
  (\bibinfo{year}{1995}): \emph{\bibinfo{title}{New notions of reduction and
  non-semantic proofs of strong $\beta$-normalization in typed
  $\lambda$-calculi}}.
\newblock In: {\sl \bibinfo{booktitle}{Proceedings of LICS'95}},
  \bibinfo{publisher}{IEEE Computer Society Press}, pp.
  \bibinfo{pages}{311--321}, \doi{10.1109/LICS.1995.523266}.

\bibitemdeclare{inproceedings}{Kikuchi2009}
\bibitem{Kikuchi2009}
\bibinfo{author}{Kentaro \surnamestart Kikuchi\surnameend}
  (\bibinfo{year}{2009}): \emph{\bibinfo{title}{On general methods for proving
  reduction properties of typed lambda terms}}.
\newblock In: {\sl \bibinfo{booktitle}{Proof theoretical study of the structure
  of logic and computation}}, \bibinfo{series}{RIMS K\^oky\^uroku 1635}, pp.
  \bibinfo{pages}{33--50}.
\newblock \urlprefix\url{http://hdl.handle.net/2433/140464}.
\newblock \bibinfo{note}{(Unrefereed proceedings)}.

\bibitemdeclare{inproceedings}{Matthes2000}
\bibitem{Matthes2000}
\bibinfo{author}{Ralph \surnamestart Matthes\surnameend}
  (\bibinfo{year}{2000}): \emph{\bibinfo{title}{Characterizing strongly
  normalizing terms of a {$\lambda$}-calculus with generalized applications via
  intersection types}}.
\newblock In: {\sl \bibinfo{booktitle}{Proceedings of ICALP Satellite Workshops
  2000}}, \bibinfo{publisher}{Carleton Scientific}, pp.
  \bibinfo{pages}{339--354}.

\bibitemdeclare{incollection}{Pottinger1980}
\bibitem{Pottinger1980}
\bibinfo{author}{Garrel \surnamestart Pottinger\surnameend}
  (\bibinfo{year}{1980}): \emph{\bibinfo{title}{A type assignment for the
  strongly normalizable {$\lambda $}-terms}}.
\newblock In: {\sl \bibinfo{booktitle}{To H.~B.~Curry: Essays on Combinatory
  Logic, Lambda Calculus and Formalism}}, \bibinfo{publisher}{Academic Press},
  \bibinfo{address}{London}, pp. \bibinfo{pages}{561--577}.

\bibitemdeclare{techreport}{vRS1995}
\bibitem{vRS1995}
\bibinfo{author}{Femke \surnamestart van Raamsdonk\surnameend} \&
  \bibinfo{author}{Paula \surnamestart Severi\surnameend}
  (\bibinfo{year}{1995}): \emph{\bibinfo{title}{On normalisation}}.
\newblock \bibinfo{type}{Technical Report} \bibinfo{number}{CS-R9545},
  \bibinfo{institution}{CWI}.

\bibitemdeclare{article}{vRSSX1999}
\bibitem{vRSSX1999}
\bibinfo{author}{Femke \surnamestart van Raamsdonk\surnameend},
  \bibinfo{author}{Paula \surnamestart Severi\surnameend},
  \bibinfo{author}{Morten Heine~B. \surnamestart S{\o}rensen\surnameend} \&
  \bibinfo{author}{Hongwei \surnamestart Xi\surnameend} (\bibinfo{year}{1999}):
  \emph{\bibinfo{title}{Perpetual reductions in {$\lambda$}-calculus}}.
\newblock {\sl \bibinfo{journal}{Information and Computation}}
  \bibinfo{volume}{149}, pp. \bibinfo{pages}{173--225},
  \doi{10.1006/inco.1998.2750}.

\bibitemdeclare{article}{Tait1967}
\bibitem{Tait1967}
\bibinfo{author}{William~W. \surnamestart Tait\surnameend}
  (\bibinfo{year}{1967}): \emph{\bibinfo{title}{Intensional interpretations of
  functionals of finite type {I}}}.
\newblock {\sl \bibinfo{journal}{The Journal of Symbolic Logic}}
  \bibinfo{volume}{32}, pp. \bibinfo{pages}{198--212}, \doi{10.2307/2271658}.

\bibitemdeclare{article}{Valentini2001}
\bibitem{Valentini2001}
\bibinfo{author}{Silvio \surnamestart Valentini\surnameend}
  (\bibinfo{year}{2001}): \emph{\bibinfo{title}{An elementary proof of strong
  normalization for intersection types}}.
\newblock {\sl \bibinfo{journal}{Archive for Mathematical Logic}}
  \bibinfo{volume}{40}, pp. \bibinfo{pages}{475--488},
  \doi{10.1007/s001530000070}.

\end{thebibliography}

\end{document}